\documentclass[12pt]{article}

\usepackage{amssymb,amsmath,amsthm,amscd,mathrsfs,amsbsy,amsfonts}
\usepackage{pstricks,pst-node}
\usepackage{amsbsy,young,colortbl,cite}

\usepackage{graphicx}

\usepackage{setspace}

\usepackage{ragged2e}

\usepackage{caption}

\usepackage{amsmath}

\usepackage{subfigure}

\usepackage{amssymb,amsthm}
\usepackage[english]{babel}
\textheight 23truecm
\textwidth 17truecm
\setlength{\topmargin}{-1.7 cm}
\setlength{\oddsidemargin}{-0.5 cm}
\setlength{\evensidemargin}{-0.5 cm}

\pagestyle{plain}

\newtheorem{proposition}{Proposition}[section]

\newtheorem{definition}[proposition]{Definition}
\newtheorem{Corollary}[proposition]{Corollary}
\newtheorem{theorem}[proposition]{Theorem}

\newtheorem{example}{Example}

\def\({\left(}
\def\){\right)}
\def\[{\begin{eqnarray}}
\def\]{\end{eqnarray}}

\usepackage{amsmath}
\numberwithin{equation}{section}

\begin{document}

\title{Virasoro symmetries of Multi-Component Gelfand-Dickey systems
}

\author{
\  \  Ling An, Chuanzhong Li\footnote{Corresponding author:lichuanzhong@nbu.edu.cn.}\\[4pt]
\small School of Mathematics and Statistics,  Ningbo University, Ningbo, 315211, China\\[4pt]
}

\date{}

\maketitle

\abstract{In this paper, we mainly study the additional symmetry and $\tau$ functions of a multi-component Gelfand-Dickey hierarchy which includes many classical integrable systems, such as the multi-component KdV hierarchy and the multi-component Boussinesq hierarchy. With other kinds of reductions, we can derive a B type multi-component Gelfand-Dickey hierarchy and a C type multi-component Gelfand-Dickey hierarchy. In our research, the additional flows of the additional symmetries can not all survive. By calculating, we find that the generator of the additional symmetry of the C type multi-component Gelfand-Dickey hierarchy is different from that of the B type multi-component Gelfand-Dickey hierarchy, while the forms of their surviving additional flows are the same.
}\\

{\bf Mathematics Subject Classifications (2010)}:  37K05, 37K10, 35Q53.\\
{\bf Keywords:} multi-component Gelfand-Dickey hierarchy, additional symmetry, string equation, $\tau$ function, Virasoro canstraint.  \\
\allowdisplaybreaks
\setlength{\baselineskip}{12pt}
\tableofcontents

\section{Introduction}

\ \ \ \ Gelfand-Dickey hierarchy was introduced by Gelfand and Dickey\cite{1}, which attracted many people's attention, and then became one of the hottest topics in classical integrable systems. The Hamiltonian theory of the Gelfand-Dickey hierarchy was developing gradually in terms of the Lax pairs, and has become a powerful supports for the study of integrable systems. These theories were introduced in detail in \cite{20}. In fact, many domestic and foreign scholars have done  extensive research on the soliton solutions of the Gelfand-Dickey hierarchy, additional symmetry,  $\tau$ function, B$\ddot{a}$cklund transformation and other related properties\cite{2}. In addition, a lot of research  has been done on supersymmetric Gelfand-Dickey hierarchy,  q-deformed Gelfand-Dickey  hierarchy, etc \cite{3,4,5,6,7,8,11}. Generally speaking, little research has been done on the multi-component Gelfand-Dickey (mcGD) hierarchy. In this paper, the additional symmetries and $\tau$ functions of a multi-component Gelfand-Dickey hierarchy, a B type multi-component Gelfand-Dickey (mcBGD) hierarchy and a C type multi-component Gelfand-Dickey (mcCGD) hierarchy will be studied based on the work done by domestic and foreign  scholars on the KP hierarchy and GD hierarchy.

When the theory of additional symmetries first appeared, it ranked only at the edge of integrable systems. With the continuous study of string equations\cite{12,13,14}, Virasoro constraints\cite{15,16} and other theories, it was found that the additional symmetries played an extremely important role in these theories, and the additional symmetries gradually attracted people's  attention\cite{17,18,19}. A direct application of the additional symmetry of the mcGD hierarchy is to derive the string equation which appears in the study of the string theory. The general expression of the string equation is $[P, Q]=1$, where $P$ and $Q$ are differential operators. For string equation, one of its important characteristics is that it has a close relationship with hierarchies of some integrable equations. The relationship between them is that the string equation is invariant under the flows generated by the equation in the hierarchy\cite{20}.
A portion of the additional symmetries can be reduced to Virasoro symmetries. Virasoro symmetries have a wide range of applications,
of which their roles in $\tau$ function are particularly important\cite{9,10}, because $\tau$ function appears as a partition function or as a generating function in modern problems of mathematics and physics. According to the action of Virassoro symmetries on the $\tau$ function, we can also get the explicit solutions of the Virasoro constainted nonlinear equations in the form of matrix integrals. Such as \cite{22,24}, it was the Virasoro costrainted solution in the Toda chain case and in the KdV case. The additional symmetries can also be used to find the eigenfunctions of the linearized problem and to solve the stability problems\cite{21}.

In this paper, the definitions and properties of a mcGD hierarchy, a mcBGD hierarchy and a mcCGD hierarchy are described respectively. The order of the article is as follows: Firstly, we define their Lax equations by the operator $L$, operator $R$ and appropriate constraints, and introduce the wave operator $\phi$ and discuss some properties of wave functions which naturally lead to their Sato equations. Then, by introducing an Orlov-Shulman's operator $M$, the definition of additional symmetries is given, and many practical properties are derived from the additional symmetries. In the calculation, we find that only a small part of the additional flows can survival, so we give the forms of the surviving additional flows. At the same time, a special additional flow is analyzed and calculated, and an important application of additional symmetries in the string theory is obtained. Finally, through the existence theorem of $\tau$ functions, the additional symmetries of $\tau$ functions are discussed, and the equivalent form of the string equation, Virasoro constraint, is obtained. It should be noted that the mcCGD hierarchy can be divided into odd and even forms.

\section{Multi-Component Gelfand-Dickey hierarchies}

\ \ \ \ The Gelfand-Dickey hierarchy is one of the most important topic in the area of classical integrable systems. The definition of the multi-component Gelfand-Dickey hierarchy is based on a $N$-order differential operator $L$ and a operator $R_{\alpha}$ like these\\
\begin{equation}
\ \ \ \ \ \ \ \ \ \ L=A\partial^{N}+u_{2}\partial^{N-2}+u_{3}\partial^{N-3}+\cdots +u_{0},
\end{equation}
among them, $A=diag(a_{1},a_{2},\cdots ,a_{n})$, $a_{i}$ are nonzero constants,
The diagonal elements of $u_{0}$ are all equal to zero, except in the case of $n=1$. And the $u_{i}$ is an arbitrary $n\times n$ matrix;\\
\begin{equation}
R_{\alpha}=\sum\limits_{i=1}^{\infty}R_{i\alpha}\partial^{-i},\ \ \alpha=1,2,\cdots ,n,
\end{equation}
among them, $ R_{0\alpha}= E_{\alpha}$, $E_{\alpha}$ is a matrix having only one nonzero element on the $(\alpha,\alpha)$ place which is equal to $1$, $R_{\alpha}$ satisfies $[L, R_{\alpha}]=0$.\\
It can be found that $R_{j\alpha}$ exist. Their elements are differential polynomials composed of $u_{i}$, and they have the following properties
\begin{equation}
\ \ R_{\alpha}R_{\beta}=\delta_{\alpha\beta}R_{\alpha},\ \ \ \ \ \ \ \ \ \ \sum\limits_{\alpha=1}^{n}R_{\alpha}=I.
\end{equation}
$L$ and $R_{\alpha}$ are collectively called the Lax operators of the mcGD hierarchy.\\

The first definition of the mcGD hierarchy are the Lax equations
\begin{equation*}
\ \ \partial_{n,\alpha}L=[B_{n,\alpha},L],\ \ \ \ \ \ \ \ \ \ \
\partial_{n,\alpha}R_{\beta}=[B_{n,\alpha},R_{\beta}],\ \ n\neq 0\ mod\ N,\
\end{equation*}
\begin{equation}
\sum\limits_{\alpha=1}^{n}\partial_{jN,\alpha}L=0,\ \ \ \ \ \ \ \ \ \ \ \ \ \
\sum\limits_{\alpha=1}^{n}\partial_{jN,\alpha}R_{\beta}=0,\ \ j=1,2,\cdots,\ \ \ \ \ \
\end{equation}
where $\partial_{n,\alpha}=\frac{\partial}{\partial t_{n,\alpha}}$, and $B_{n,\alpha}$ refers to the differential part of the operator $L^{\frac{n}{N}}R_{\alpha}$,
\begin{equation*}
B_{n,\alpha}=(L^{\frac{n}{N}}R_{\alpha})_{+}.
\end{equation*}
The second definition of the mcGD hierarchy is the zero curvature equation
\begin{equation}
\partial_{n,\alpha}B_{m,\alpha}-\partial_{m,\alpha}B_{n,\alpha}+[B_{m,\alpha},B_{n,\alpha}]=0,\ \ n,m\neq 0\ mod\ N,\ \
\end{equation}
the zero curvature equation of the mcGD hierarchy can be derived from its Lax equations.\\
Note: The variables $x$ and $t_{n,\alpha}$ are not independent, but they are also not equivalent. They have the following relationship
\begin{equation*}
\partial=\sum\limits_{\alpha=1}^{n}a_{\alpha}^{-\frac{1}{N}}\partial_{1,\alpha}.
\end{equation*}
When $N=2$, we can derive the multi-component KdV hierarchy; when $N=3$, we can derive the multi-component Boussinesq hierarchy.

The Lax operators of the mcGD hierarchy can also be expressed in dressing form
\begin{equation}
L=\phi A\partial^{N}\phi^{-1},\ \ \ \ \ R_{\alpha}=\phi E_{\alpha}\phi^{-1}, \ \ \ \ \ \
\end{equation}
where the quasi-differential operator $\phi=\phi(A\partial^{N})=\sum\limits_{i=1}^{\infty}\alpha_{i}(A\partial^{N})^{-i}, \alpha_{0}=I$, $\phi$ is called dressing operator or wave operator.
\begin{proposition}
By dressing transformation, the vector field on $A_{u}$ can be converted to $A_{w}$
\begin{equation*}
\partial_{n,\alpha}\phi=-(L^{\frac{n}{N}}R_{\alpha})_{-}\phi,\ \ \ n\neq 0\ mod\ N,
\end{equation*}
\begin{equation}
\sum\limits_{\alpha=1}^{n}\partial_{jN,\alpha}\phi=0,\ \ \ j=1,2,\cdots,\ \ \ \ \ \ \ \ \ \ \
\end{equation}
the above equations are called the Sato equations of the multi-component Gelfand-Dickey hierarchy.
\end{proposition}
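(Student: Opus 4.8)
The plan is to transport the given Lax flows to the dressing operator by differentiating the dressing relations $L=\phi A\partial^{N}\phi^{-1}$ and $R_{\alpha}=\phi E_{\alpha}\phi^{-1}$. First I would treat the non-resonant case $n\neq 0\bmod N$ and set $M:=(\partial_{n,\alpha}\phi)\phi^{-1}$. Differentiating the relation for $L$, using $\partial_{n,\alpha}(\phi^{-1})=-\phi^{-1}(\partial_{n,\alpha}\phi)\phi^{-1}$ and the fact that $A\partial^{N}$ is constant in the times, gives $\partial_{n,\alpha}L=ML-LM=[M,L]$; the identical computation on $R_{\beta}=\phi E_{\beta}\phi^{-1}$ gives $\partial_{n,\alpha}R_{\beta}=[M,R_{\beta}]$. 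Since $\phi=I+O(\partial^{-1})$ has constant leading coefficient $I$, its derivative $\partial_{n,\alpha}\phi$, and hence $M$, are purely of order $\leq -1$; this is the feature that will later match the integral part $(L^{\frac{n}{N}}R_{\alpha})_{-}$.

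Next I would compare with the Lax equations $\partial_{n,\alpha}L=[B_{n,\alpha},L]$ and $\partial_{n,\alpha}R_{\beta}=[B_{n,\alpha},R_{\beta}]$. Writing $B_{n,\alpha}=L^{\frac{n}{N}}R_{\alpha}-(L^{\frac{n}{N}}R_{\alpha})_{-}$ and noting that $L^{\frac{n}{N}}R_{\alpha}$ commutes with both $L$ and every $R_{\beta}$ — because $[L,R_{\alpha}]=0$, because the idempotent relations $R_{\alpha}R_{\beta}=\delta_{\alpha\beta}R_{\alpha}$ give $[R_{\alpha},R_{\beta}]=0$, and because $L^{\frac{n}{N}}$ commutes with anything commuting with $L$ — the Lax equations become $\partial_{n,\alpha}L=-[(L^{\frac{n}{N}}R_{\alpha})_{-},L]$ and $\partial_{n,\alpha}R_{\beta}=-[(L^{\frac{n}{N}}R_{\alpha})_{-},R_{\beta}]$. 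Subtracting these from the dressing identities shows that $C:=M+(L^{\frac{n}{N}}R_{\alpha})_{-}$, which is of order $\leq -1$ since both summands are, commutes with $L$ and with all $R_{\beta}$. Conjugating by $\phi^{-1}$ reduces this to the statement that $\phi^{-1}C\phi$ is of order $\leq -1$ and commutes with $A\partial^{N}$ and with all $E_{\beta}$, hence is a constant-coefficient diagonal operator. The conclusion I am after is $C=0$, that is, $\partial_{n,\alpha}\phi=-(L^{\frac{n}{N}}R_{\alpha})_{-}\phi$.

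For the resonant flows I would simply sum the non-resonant formula over $\alpha$ at level $jN$: using $\sum_{\alpha}L^{j}R_{\alpha}=L^{j}\sum_{\alpha}R_{\alpha}=L^{j}$ from $\sum_{\alpha}R_{\alpha}=I$, and the fact that $L^{j}$ is purely differential so that $(L^{j})_{-}=0$, I get $\sum_{\alpha}(L^{j}R_{\alpha})_{-}=(L^{j})_{-}=0$. Hence $\sum_{\alpha}\partial_{jN,\alpha}\phi=-\sum_{\alpha}(L^{j}R_{\alpha})_{-}\phi=0$, which is precisely the second Sato equation. This step is purely algebraic once the first part is in place.

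The \emph{main obstacle} is the identification $C=0$. The common centralizer isolated above contains nonzero elements of order $\leq -1$ — for example the images $\phi(\cdot)\phi^{-1}$ of constant diagonal operators such as $E_{\beta}\partial^{-N}$ — so the commutation relations alone do not force $C$ to vanish; the Lax equations determine the flow of $\phi$ only up to this centralizer ambiguity, which is exactly the gauge freedom $\phi\mapsto\phi\,c$ in the dressing operator with $c$ constant-coefficient. The remaining work is therefore to pin down the canonical lift: one must show that the normalization fixing $\phi$ (leading term $I$, preserved by the flow, equivalently the wave-function normalization) selects $C=0$. Equivalently, I would verify directly that the candidate $\partial_{n,\alpha}\phi=-(L^{\frac{n}{N}}R_{\alpha})_{-}\phi$ is self-consistent, has the correct order $\leq -1$, and reproduces the Lax flows, so that it is the vector field on $A_{w}$ covering the given vector field on $A_{u}$.
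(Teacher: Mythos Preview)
The paper does not actually prove this proposition; it states the Sato equations as the dressing-side counterpart of the Lax flows and moves on, relying implicitly on Dickey's framework \cite{20}. So there is no paper proof to compare against, and your proposal must stand on its own.

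Your strategy is the standard one and is essentially correct: differentiate the dressing relations, compare with the Lax equations, and isolate the discrepancy $C$ in the common centralizer of $L$ and all $R_\beta$. You have also correctly identified the genuine obstacle, namely that the centralizer of $\{A\partial^N,E_1,\dots,E_n\}$ among operators of order $\le -1$ is nontrivial, so commutation alone does not give $C=0$. Your proposed resolution --- regard the Sato equation as the \emph{definition} of the lifted vector field and verify it is consistent and covers the Lax flow --- is exactly how this is handled in Dickey's treatment: the map $\phi\mapsto L$ has fibres given by the right action of constant-coefficient diagonal operators, and the Sato equation is the canonical horizontal lift compatible with the normalization $\alpha_0=I$. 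That is the right way to close the argument.

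Two small points. First, avoid the letter $M$ for $(\partial_{n,\alpha}\phi)\phi^{-1}$, since the paper reserves $M$ for the Orlov--Shulman operator introduced immediately afterwards. Second, your treatment of the resonant case has a logical slip: you write ``sum the non-resonant formula over $\alpha$ at level $jN$'', but the first Sato equation is only asserted for $n\not\equiv 0\bmod N$, so you cannot literally instantiate it at $n=jN$ and sum. The fix is either to run the same centralizer argument directly on the summed operator $\sum_\alpha(\partial_{jN,\alpha}\phi)\phi^{-1}$ using the summed Lax constraints $\sum_\alpha\partial_{jN,\alpha}L=0$ and $\sum_\alpha\partial_{jN,\alpha}R_\beta=0$, or to note that the formula $\partial_{n,\alpha}\phi=-(L^{n/N}R_\alpha)_-\phi$ makes sense for all $n$ and that summing at $n=jN$ gives $-(L^j)_-\phi=0$ since $L^j$ is differential; either route yields the second equation with the same gauge caveat as before.
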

The wave function of the mcGD hierarchy is discussed below.\\
Let's first introduce a series
\begin{equation}
\ \ \xi(t,z)=\sum\limits_{i=1}^{\infty}\sum\limits_{\alpha=1}^{n}t_{i,\alpha}E_{\alpha}z^{i}, \ \ \ \ \ \ \ \ \ \ \ \ \ \ \ \ \ \ \ \ \ \ \ \ \ \ \
\end{equation}
$\xi(t,z)$ has the following properties
\begin{equation}\label{3.4}
\begin{split}
\partial_{n,\alpha}e^{\xi(t,z)}=z^{n}E_{\alpha}e^{\xi(t,z)}, \ \ n\neq 0\ mod\ N, \ \ \ \ \ \\
\sum\limits_{\alpha=1}^{n}\partial_{jN,\alpha}e^{\xi(t,z)}=z^{jN}e^{\xi(t,z)},\ \ j=1,2,\cdots,\ \ \ \\                     \partial^{m}e^{\xi(t,z)}=z^{m}A^{-\frac{m}{N}}e^{\xi(t,z)}.\ \ \ \ \ \ \ \ \ \ \ \ \ \ \ \ \ \ \ \ \ \ \ \
\end{split}
\end{equation}
Then the wave function of the mcGD hierarchy can be defined as
\begin{equation*}
W(t,z)=\phi e^{\xi(t,z)}=\omega(t,z)e^{\xi(t,z)}.
\end{equation*}
\begin{Corollary}
The wave function of the multi-component Gelfand-Dickey hierarchy satisfies
\begin{equation}
\begin{split}
L^{\frac{m}{N}}W=z^{m}W,\ \ \ \ \ \ \ \ \partial_{n,\alpha}W=(L^{\frac{n}{N}}R_{\alpha})_{+}W, \ \ n\neq 0\ mod\ N,\ \ \ \ \ \ \ \ \\
\sum\limits_{\alpha=1}^{n}\partial_{jN,\alpha}W=L^{j}W,\ \ j=1,2,\cdots, \ \ \ \ \ \ \ \ \ \ \ \ \ \ \ \ \ \ \ \ \ \ \ \ \ \ \ \ \ \ \ \ \ \ \ \ \ \ \
\end{split}
\end{equation}
\end{Corollary}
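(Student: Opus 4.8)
The plan is to derive all three identities directly from the dressing representation $L^{m/N}=\phi\,(A\partial^{N})^{m/N}\phi^{-1}$, together with the Sato equations of the Proposition and the three properties of $e^{\xi(t,z)}$ collected in \eqref{3.4}. The unifying observation is that conjugation by $\phi$ turns every statement about $W=\phi\,e^{\xi(t,z)}$ into a statement about $e^{\xi(t,z)}$, on which the action of $L^{m/N}$, of $\partial$, and of the flows $\partial_{n,\alpha}$ is already explicit; so each identity reduces to substituting a known formula and cancelling $\phi^{-1}\phi$.

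For the spectral relation I would first record that, since $A=\mathrm{diag}(a_1,\dots,a_n)$ is a constant diagonal matrix commuting with $\partial$, one has $(A\partial^{N})^{m/N}=A^{m/N}\partial^{m}$, whence $L^{m/N}=\phi\,A^{m/N}\partial^{m}\phi^{-1}$. Applying this to $W$ and cancelling $\phi^{-1}\phi$ gives $L^{m/N}W=\phi\,A^{m/N}\partial^{m}e^{\xi}$; substituting the last line of \eqref{3.4}, $\partial^{m}e^{\xi}=z^{m}A^{-m/N}e^{\xi}$, and using that the diagonal factors $A^{m/N}$ and $A^{-m/N}$ commute, the powers of $A$ cancel and leave $L^{m/N}W=z^{m}\phi\,e^{\xi}=z^{m}W$. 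Taking $m=jN$ here already identifies $z^{jN}W$ with $L^{j}W$, which will close the third identity.

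The middle identity is the delicate one, and the key point is a cancellation between the negative part produced by the Sato flow and a negative part hidden in an ``extra'' Leibniz term. Differentiating $W=\phi\,e^{\xi}$ and inserting $\partial_{n,\alpha}\phi=-(L^{n/N}R_{\alpha})_{-}\phi$ and $\partial_{n,\alpha}e^{\xi}=z^{n}E_{\alpha}e^{\xi}$ yields
$$\partial_{n,\alpha}W=-(L^{n/N}R_{\alpha})_{-}W+z^{n}\phi\,E_{\alpha}e^{\xi}.$$
The crux is to recognise the second term: using $R_{\alpha}=\phi\,E_{\alpha}\phi^{-1}$ and the fact that the constant diagonal matrices $E_{\alpha}$ and $A^{n/N}$ commute with each other and with $\partial$, I would rewrite $L^{n/N}R_{\alpha}=\phi\,E_{\alpha}A^{n/N}\partial^{n}\phi^{-1}$ and compute, exactly as in the spectral step, that $L^{n/N}R_{\alpha}W=z^{n}\phi\,E_{\alpha}e^{\xi}$. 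Hence the extra term is precisely $L^{n/N}R_{\alpha}W=(L^{n/N}R_{\alpha})_{+}W+(L^{n/N}R_{\alpha})_{-}W$, and adding it to $-(L^{n/N}R_{\alpha})_{-}W$ cancels the minus part and leaves $\partial_{n,\alpha}W=(L^{n/N}R_{\alpha})_{+}W$.

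Finally, summing over $\alpha$ and using the second Sato equation $\sum_{\alpha}\partial_{jN,\alpha}\phi=0$ kills the $\phi$-derivative, so that $\sum_{\alpha}\partial_{jN,\alpha}W=\phi\,\big(\sum_{\alpha}\partial_{jN,\alpha}e^{\xi}\big)=z^{jN}\phi\,e^{\xi}$ by the second line of \eqref{3.4}; this equals $z^{jN}W=L^{j}W$ by the $m=jN$ case above. The only genuine obstacle is the bookkeeping in the middle identity: one must justify the commutations of $A^{n/N}$, $E_{\alpha}$, and $\partial^{n}$ (valid because $A$ and $E_{\alpha}$ are constant and diagonal) so that the two negative parts match and cancel. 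Everything else is a direct substitution.
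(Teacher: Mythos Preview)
Your proposal is correct and follows essentially the same route as the paper's own proof: dressing form plus \eqref{3.4} for the spectral relation, Leibniz rule plus the Sato equation and the identification $\phi\,z^{n}E_{\alpha}e^{\xi}=L^{n/N}R_{\alpha}W$ for the cancellation in the middle identity, and the second Sato equation for the third. If anything, your justification of the commutations and of the cancellation of the $A^{\pm m/N}$ factors is slightly more explicit than the paper's, but the argument is the same.
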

\begin{proof}
According to the formula \eqref{3.4}, we can obtain
\begin{equation*}
A^{\frac{m}{N}}\partial^{m}e^{\xi(t,z)}=z^{m}e^{\xi(t,z)},\ \ \ \ \ \ \ \ \ \ \ \ \ \ \ \ \ \ \ \ \ \ \ \ \ \ \ \
\end{equation*}
so we can get
\begin{equation*}
A^{\frac{m}{N}}\partial^{m}=z^{m}, \ \ \ \ \ \ \ \ \ \ \ \ \ \ \ \ \ \ \ \ \ \ \ \ \ \ \ \ \ \ \ \ \ \ \ \ \ \ \ \ \ \
\end{equation*}
therefore
\begin{equation*}
 \ \ \ L^{\frac{m}{N}}W=\phi A^{\frac{m}{N}}\partial^{m}e^{\xi(t,z)}=A^{\frac{m}{N}}\partial^{m}\phi e^{\xi(t,z)}=z^{m}W,
\end{equation*}
\begin{equation*}
\begin{split}
\ \ \ \ \ \ \ \ \ \ \partial_{n,\alpha}W=&\partial_{n,\alpha}(\phi e^{\xi(t,z)})=(\partial_{n,\alpha}\phi)e^{\xi(t,z)}+\phi(\partial_{n,\alpha}e^{\xi(t,z)})\\
=&-(L^{\frac{n}{N}}R_{\alpha})_{-}\phi e^{\xi(t,z)}+\phi z^{n}E_{\alpha}e^{\xi(t,z)}\\
=&-(L^{\frac{n}{N}}R_{\alpha})_{-}W+\phi A^{\frac{n}{N}}\partial^{n}E_{\alpha}\phi^{-1}\phi e^{\xi(t,z)}\\
=&-(L^{\frac{n}{N}}R_{\alpha})_{-}W+L^{\frac{n}{N}}R_{\alpha}W\\
=&(L^{\frac{n}{N}}R_{\alpha})_{+}W.
\end{split}
\end{equation*}
\begin{equation*}
\begin{split}
\ \ \ \ \ \ \ \ \ \ \ \ \ \ \sum\limits_{\alpha=1}^{n}\partial_{jN,\alpha}W=&\sum\limits_{\alpha=1}^{n}\partial_{jN,\alpha}(\phi e^{\xi(t,z)})
=\phi(\sum\limits_{\alpha=1}^{n}\partial_{jN,\alpha}e^{\xi(t,z)})\\ =&\phi(z^{jN}e^{\xi(t,z)})=L^{j}W.
\end{split}
\end{equation*}
\end{proof}
Next, let's consider the adjoint wave function $W^{*}(t,z)$, where the adjoint symbol $``\ast "$ represents a formal adjoint operator, such as $\partial^{*}=-\partial,\ (\partial^{-1})^{*}=-\partial^{-1},\ (AB)^{*}=B^{*}A^{*}$.\\
The adjoint wave function $W^{*}(t,z)$ of the mcGD hierarchy is defined as
\begin{equation}
W^{*}(t,z)=(\phi^{*})^{-1}e^{-\xi(t,z)}.\ \ \ \ \ \ \ \ \ \ \ \ \ \ \ \ \ \ \ \ \ \ \ \
\end{equation}
We have given the Lax operators of the mcGD hierarchy before, and it is easy to prove that the operators $\partial_{n,\alpha}-(L^{\frac{n}{N}}R_{\alpha})_{+}$ and $\sum\limits_{\alpha=1}^{n}\partial_{jN,\alpha}-L^{j}$ can be expressed in a dressing form
\begin{equation*}
\partial_{n,\alpha}-(L^{\frac{n}{N}}R_{\alpha})_{+}=\phi(\partial_{n,\alpha}-A^{\frac{n}{N}}\partial^{n}E_{\alpha})\phi^{-1},\ \  n\neq 0\ mod\ N, \ \ \ \  \ \ \ \
\end{equation*}
\begin{equation}
\sum\limits_{\alpha=1}^{n}\partial_{jN,\alpha}-L^{j}=\phi(\sum\limits_{\alpha=1}^{n}\partial_{jN,\alpha}-A^{j}\partial^{jN})\phi^{-1},\ \ j=1,2,\cdots.\ \ \ \ \ \ \ \ \ \
\end{equation}
Dressing transformation on both sides of $[\partial_{n,\alpha}-A^{\frac{n}{N}}\partial^{n}E_{\alpha},A\partial^{N}]=0$ and $[\sum\limits_{\alpha=1}^{n}\partial_{jN,\alpha}-A^{j}\partial^{jN},A\partial^{N}]=0$, then we can get the Lax equations of the mcGD hierarchy
\begin{equation*}
[\partial_{n,\alpha}-(L^{\frac{n}{N}}R_{\alpha})_{+},L]=0,\ \  n\neq 0\ mod\ N,\ \ \
\end{equation*}
\begin{equation}
[\sum\limits_{\alpha=1}^{n}\partial_{jN,\alpha}-L^{j},L]=0,\ \  j=1,2,\cdots.\ \ \ \ \ \ \ \
\end{equation}

\subsection{Additional symmetry of the mcGD hierarchies}
\ \ \ \ We introduce an operator
\begin{equation}
\ \ \ \ \Gamma=\sum\limits_{k=1}^{\infty}\sum\limits_{\alpha=1}^{n}kt_{k,\alpha}A^{\frac{k-1}{N}}\partial^{k-1}E_{\alpha},\ \ k\neq 0\ mod\ N.
\end{equation}
The operator $\Gamma$ has the following properties
\begin{equation*}
\partial_{n,\alpha}\Gamma=nA^{\frac{n-1}{N}}\partial^{n-1}E_{\alpha},\ \ \
\partial^{k}\Gamma=\Gamma\partial^{k}+kA^{-\frac{1}{N}}\partial^{k-1}, \ \ \ \  n\neq 0\ mod\ N.
\end{equation*}
It is not difficult to verify that the operators $\Gamma$ and $\partial_{n,\alpha}-A^{\frac{n}{N}}\partial^{n}E_{\alpha}$ are commutative, that is to say,
\begin{equation}
[\partial_{n,\alpha}-A^{\frac{n}{N}}\partial^{n}E_{\alpha},\Gamma]=0, \ \ \  n\neq 0\ mod\ N.\ \
\end{equation}
Dressing transformation on both sides of $[\partial_{n,\alpha}-A^{\frac{n}{N}}\partial^{n}E_{\alpha},\Gamma]=0$, then we can obtain
\begin{equation}
\partial_{n,\alpha}M=[(L^{\frac{n}{N}}R_{\alpha})_{+},M], \ \ \  n\neq 0\ mod\ N,\ \
\end{equation}
where $M=\phi\Gamma\phi^{-1}$, we call it an Orlov-Shulman's operator.
\begin{definition}
The solution of the differential equation
\begin{equation}
\frac{\partial\phi}{\partial_{l,m,\alpha}^{*}}=-(M^{m}L^{\frac{l}{N}}R_{\alpha})_{-}\phi, \ \ (\partial_{l,m,\alpha}^{*}=\frac{\partial}{\partial t_{l,m,\alpha}^{*}})\ \ \ \ \ \ \ \ \ \ \ \ \ \ \ \ \ \ \ \ \ \ \
\end{equation}
is called the additional symmetry of the multi-component Gelfand-Dickey hierarchy.
\end{definition}
Next we consider a case related to the restriction of Virasoro. For the special differential operator $M^{m}L^{\frac{l}{N}}R_{\alpha}$, assuming its negative part disappears and let the operator $(M^{m}L^{\frac{l}{N}}R_{\alpha})_{+}$ act on $W$, we can get an equation related to $z$
\begin{equation}
(M^{m}L^{\frac{l}{N}}R_{\alpha})_{+}W=z^{l}E_{\alpha}\partial^{m}_{z}W.
\end{equation}
Note:This system can be rewritten into a linear equation for the isomonodromy problem.
\begin{Corollary}
Combining with the definition of the additional symmetry of the mcGD hierarchy, we can obtain
\begin{equation*}
\partial_{l,m,\alpha}^{*}L=-[(M^{m}L^{\frac{l}{N}}R_{\alpha})_{-},L], \ \ \ \ \ \ \ \ \ \ \ \ \ \ \ \ \ \ \ \
\end{equation*}
\begin{equation}
\partial_{l,m,\alpha}^{*}R_{\beta}=-[(M^{m}L^{\frac{l}{N}}R_{\alpha})_{-},R_{\beta}], \ \ \ \ \ \ \ \ \ \ \ \ \ \ \ \ \
\end{equation}
these imply
\begin{equation}
\ \ \ \ \ \ \partial_{l,m,\alpha}^{*}(L^{\frac{n}{N}}R_{\beta})_{-}=-[(M^{m}L^{\frac{l}{N}}R_{\alpha})_{-},(L^{\frac{n}{N}}R_{\beta})_{-}+\partial_{n,\beta}]_{-},\ \  n\neq 0\ mod\ N.
\end{equation}
\end{Corollary}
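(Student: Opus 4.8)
The plan is to obtain the first two identities directly from the definition of the additional symmetry together with the dressing forms $L=\phi A\partial^{N}\phi^{-1}$ and $R_{\beta}=\phi E_{\beta}\phi^{-1}$, and then to deduce the third (the ``these imply'' statement) by a projection argument whose only delicate point is the book-keeping of the extra $\partial_{n,\beta}$ term. Throughout I abbreviate $B:=(M^{m}L^{\frac{l}{N}}R_{\alpha})_{-}$.

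First I would differentiate the dressing relations. From the Definition we have $\partial_{l,m,\alpha}^{*}\phi=-B\phi$, and differentiating $\phi\phi^{-1}=I$ gives $\partial_{l,m,\alpha}^{*}(\phi^{-1})=\phi^{-1}B$. The Leibniz rule applied to $L=\phi A\partial^{N}\phi^{-1}$ and to $R_{\beta}=\phi E_{\beta}\phi^{-1}$ then collapses each expression into a single commutator, since $\phi(A\partial^{N})\phi^{-1}=L$ and $\phi E_{\beta}\phi^{-1}=R_{\beta}$; this yields $\partial_{l,m,\alpha}^{*}L=-[B,L]$ and $\partial_{l,m,\alpha}^{*}R_{\beta}=-[B,R_{\beta}]$ with essentially no computation. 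The very same computation applies verbatim to $P:=L^{\frac{n}{N}}R_{\beta}=\phi A^{\frac{n}{N}}\partial^{n}E_{\beta}\phi^{-1}$, giving $\partial_{l,m,\alpha}^{*}P=-[B,P]$, which is the starting point for the implication.

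Next I would exploit that $t_{l,m,\alpha}^{*}$ is a flow parameter, so the derivation $\partial_{l,m,\alpha}^{*}$ acts only on coefficients and therefore commutes with the projection $(\cdot)_{-}$ onto the strictly negative part. Hence $\partial_{l,m,\alpha}^{*}(P_{-})=(\partial_{l,m,\alpha}^{*}P)_{-}=-[B,P]_{-}$. Splitting $P=P_{+}+P_{-}$, the claim reduces to showing $[B,P_{+}]_{-}=[B,\partial_{n,\beta}]_{-}$, after which substitution gives $\partial_{l,m,\alpha}^{*}(P_{-})=-[B,P_{-}]_{-}-[B,\partial_{n,\beta}]_{-}=-[B,\,P_{-}+\partial_{n,\beta}]_{-}$, exactly the asserted formula.

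The decisive step, and the one I expect to be the main obstacle to phrase cleanly, is the identification of the two $(\cdot)_{-}$ pieces. Viewing $\partial_{n,\beta}$ as an operator, its commutator with $B$ is $[B,\partial_{n,\beta}]=-\partial_{n,\beta}(B)$ (differentiation of the coefficients of $B$), which already lies in the negative part. Writing $Q:=M^{m}L^{\frac{l}{N}}R_{\alpha}$, so that $B=Q_{-}$, and using the ordinary mcGD flow $\partial_{n,\beta}Q=[(L^{\frac{n}{N}}R_{\beta})_{+},Q]=[P_{+},Q]$ (a consequence of the flow equations for $L$, $R_{\alpha}$ and $M$, which all evolve by the adjoint action of $P_{+}$), together with the commutation of $\partial_{n,\beta}$ with $(\cdot)_{-}$, I get $\partial_{n,\beta}(B)=[P_{+},Q]_{-}$, hence $[B,\partial_{n,\beta}]_{-}=-[P_{+},Q]_{-}=[Q,P_{+}]_{-}$. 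On the other hand $[B,P_{+}]_{-}=[Q_{-},P_{+}]_{-}=[Q,P_{+}]_{-}-[Q_{+},P_{+}]_{-}$, and the final term vanishes because $Q_{+}$ and $P_{+}$ are purely differential operators, so their commutator has no negative part. This gives $[B,P_{+}]_{-}=[B,\partial_{n,\beta}]_{-}$ and completes the argument. The care needed is entirely in tracking signs and in using twice that $(\cdot)_{-}$ commutes with the relevant derivations while $[(\text{diff op}),(\text{diff op})]_{-}=0$.
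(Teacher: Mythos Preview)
Your proof is correct and follows essentially the same path as the paper: the dressing argument for the first two identities is identical, and for the third identity the paper simply writes down the two-term decomposition $-[B,P_{-}]_{-}-[B,\partial_{n,\beta}]_{-}$ without justification, whereas you supply the missing step by showing $[B,P_{+}]_{-}=[B,\partial_{n,\beta}]_{-}$ via the flow equation $\partial_{n,\beta}Q=[P_{+},Q]$. A marginally shorter variant, perhaps closer to what the paper has in mind, is to use the Sato equation $P_{-}=-(\partial_{n,\beta}\phi)\phi^{-1}$ directly and differentiate that, which yields $-[B,\partial_{n,\beta}]-[B,P_{-}]$ in one line; but your route is equally valid.
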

\begin{proof}
By calculating, we can get
\begin{equation*}
\begin{split}
&\partial_{l,m,\alpha}^{*}L\\
=&\partial_{l,m,\alpha}^{*}(\phi A\partial^{N}\phi^{-1})\\
=&(\partial_{l,m,\alpha}^{*}\phi)A\partial^{N}\phi^{-1}+\phi A\partial^{N}(\partial_{l,m,\alpha}^{*}\phi^{-1})\\
=&-(M^{m}L^{\frac{l}{N}}R_{\alpha})_{-}L+L(M^{m}L^{\frac{l}{N}}R_{\alpha})_{-}\\
=&-[(M^{m}L^{\frac{l}{N}}R_{\alpha})_{-},L],\ \ \ \ \ \ \ \ \ \ \ \ \ \ \ \ \ \ \ \ \ \ \ \ \ \ \ \ \ \ \ \ \ \ \ \ \ \ \ \ \ \ \ \ \ \ \ \ \ \ \ \ \ \
\end{split}
\end{equation*}
The second equation can be proved by the same principle. Then we can find
\begin{equation*}
\begin{split}
&\partial_{l,m,\alpha}^{*}(L^{\frac{n}{N}}R_{\beta})_{-}\\
=&-[(M^{m}L^{\frac{l}{N}}R_{\alpha})_{-},(L^{\frac{n}{N}}R_{\beta})_{-}]_{-}-[(M^{m}L^{\frac{l}{N}}R_{\alpha})_{-},\partial_{n,\beta}]_{-}\\
=&-[(M^{m}L^{\frac{l}{N}}R_{\alpha})_{-},(L^{\frac{n}{N}}R_{\beta})_{-}+\partial_{n,\beta}]_{-}. \ \ \ \ \ \ \ \ \ \ \ \ \ \ \ \ \ \ \ \ \ \ \ \ \ \ \ \ \ \ \ \
\end{split}
\end{equation*}
\end{proof}
For the additional flows of the mcGD hierarchy, only a few special additional flows can survive.
\begin{theorem}\label{theorem-1}
In the additional flows of the multi-component Gelfand-Dickey hierarchy, only the flows which satisfy the condition $(M^{m-1}L^{\frac{N+l-1}{N}})_{-}=0$ and are shaped like  $\sum\limits_{\alpha=1}^{n}\partial_{l,m,\alpha}^{*}$ can survive.
\end{theorem}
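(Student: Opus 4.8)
The plan is to read ``survive'' as: the additional flow must respect the Gelfand--Dickey reduction, i.e.\ it must keep $L$ a genuine differential operator, $(L)_-=0$, this being exactly the constraint that distinguishes the mcGD hierarchy from the unreduced (KP-type) system. So the task is to decide, for which data $(l,m,\alpha)$ or which combinations of the $\partial_{l,m,\alpha}^{*}$, the variation $\partial_{l,m,\alpha}^{*}L$ is again a differential operator. I would start from the formula proved in the previous Corollary, $\partial_{l,m,\alpha}^{*}L=-[(M^{m}L^{\frac{l}{N}}R_{\alpha})_{-},L]$, and isolate the obstruction by splitting the generator as $M^{m}L^{\frac{l}{N}}R_{\alpha}=(M^{m}L^{\frac{l}{N}}R_{\alpha})_{+}+(M^{m}L^{\frac{l}{N}}R_{\alpha})_{-}$. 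Since $[(\,\cdot\,)_{+},L]$ is a commutator of differential operators it is automatically differential and contributes nothing to the negative part; using $[L^{\frac{l}{N}},L]=0$ and $[R_{\alpha},L]=0$ one gets
\begin{equation*}
(\partial_{l,m,\alpha}^{*}L)_{-}=-\big([M^{m},L]L^{\frac{l}{N}}R_{\alpha}\big)_{-},
\end{equation*}
so the survival of a single flow is governed entirely by $\big([M^{m},L]L^{\frac{l}{N}}R_{\alpha}\big)_{-}$.

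Next I would evaluate $[M^{m},L]$ through the canonical relation. From the stated property $\partial^{k}\Gamma=\Gamma\partial^{k}+kA^{-\frac{1}{N}}\partial^{k-1}$ together with the dressing $M=\phi\Gamma\phi^{-1}$ and $L^{\frac{s}{N}}=\phi A^{\frac{s}{N}}\partial^{s}\phi^{-1}$, one obtains $[M,L^{\frac{s}{N}}]=-sL^{\frac{s-1}{N}}$, in particular $[M,L]=-NL^{\frac{N-1}{N}}$. Iterating gives $[M^{m},L]=-mNL^{\frac{N-1}{N}}M^{m-1}+(\text{lower order in }M)$, and after reordering $M$ past the fractional powers and multiplying by $L^{\frac{l}{N}}$,
\begin{equation*}
[M^{m},L]L^{\frac{l}{N}}=-mN\,M^{m-1}L^{\frac{N+l-1}{N}}+(\text{lower order in }M),
\end{equation*}
which already displays the combination $M^{m-1}L^{\frac{N+l-1}{N}}$ of the statement.

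The decisive structural point is the role of $R_{\alpha}$. Because $R_{\alpha}=\phi E_{\alpha}\phi^{-1}=E_{\alpha}+\sum_{i\geq1}R_{i\alpha}\partial^{-i}$ is \emph{not} a differential operator, even when $(M^{m-1}L^{\frac{N+l-1}{N}})_{-}=0$ the product $(M^{m-1}L^{\frac{N+l-1}{N}}R_{\alpha})_{-}$ is generically nonzero: the differential operator $M^{m-1}L^{\frac{N+l-1}{N}}$ acquires a negative tail after multiplication by the nontrivial negative tail of $R_{\alpha}$. Hence no single $\partial_{l,m,\alpha}^{*}$ can survive. Summing over $\alpha$ and invoking $\sum_{\alpha=1}^{n}R_{\alpha}=I$ removes exactly this defect: the generator collapses to $(M^{m}L^{\frac{l}{N}})_{-}$, so that $\sum_{\alpha=1}^{n}\partial_{l,m,\alpha}^{*}\phi=-(M^{m}L^{\frac{l}{N}})_{-}\phi$, the obstruction becomes $\big([M^{m},L]L^{\frac{l}{N}}\big)_{-}=-mN(M^{m-1}L^{\frac{N+l-1}{N}})_{-}+(\text{lower order in }M)$, and I would conclude that $\sum_{\alpha=1}^{n}\partial_{l,m,\alpha}^{*}$ preserves $(L)_{-}=0$ precisely when $(M^{m-1}L^{\frac{N+l-1}{N}})_{-}=0$. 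As a consistency check, the surviving combination $\sum_{\alpha}$ is the same one appearing in the reduction conditions $\sum_{\alpha}\partial_{jN,\alpha}\phi=0$.

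The hard part will be the ``lower order in $M$'' terms. Reordering $M$ past the fractional powers $L^{\frac{s}{N}}$ via $[M,L^{\frac{s}{N}}]=-sL^{\frac{s-1}{N}}$ produces a finite cascade of commutator terms, and I must verify that these introduce no obstruction independent of $(M^{m-1}L^{\frac{N+l-1}{N}})_{-}=0$, so that this single relation is genuinely sufficient and not merely the leading necessary condition. The bookkeeping is delicate because $(\,\cdot\,)_{-}$ does not commute with left multiplication by the differential operator $L$ (for instance $(\partial\cdot\partial^{-1})_{-}=0$ even though $\partial^{-1}$ is purely negative). I would handle this by induction on $m$, peeling off one factor of $M$ at a time with the canonical relation and showing that each auxiliary term either reduces to a lower instance of the same vanishing condition or appears as a commutator with a differential operator that drops out after the $\sum_{\alpha}$ is taken.
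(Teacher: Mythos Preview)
Your approach is essentially the paper's own. Both read ``survive'' as preservation of $(L)_-=0$, both reduce the question to the negative part of a commutator with $L$, and both invoke $\sum_\alpha R_\alpha=I$ to eliminate the $R_\alpha$ tail after summing over $\alpha$. The only real difference is cosmetic: the paper undresses first, writing
\[
(\partial_{l,m,\alpha}^{*}L)_{-}=-\big(\phi[\Gamma^{m}A^{\frac{l}{N}}\partial^{l}E_{\alpha},A\partial^{N}]\phi^{-1}\big)_{-},
\]
and then evaluates the bare commutator $[\Gamma^{m}A^{\frac{l}{N}}\partial^{l}E_{\alpha},A\partial^{N}]$, while you stay in the dressed picture and use $[M,L^{s/N}]=-sL^{(s-1)/N}$ directly. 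Both routes land on the obstruction $mN(M^{m-1}L^{\frac{N+l-1}{N}})_{-}$ after the $\alpha$-sum.

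Where you are actually \emph{more} careful than the paper is on the ``lower order in $M$'' terms. The paper simply writes ``after some deductions'' and records the clean formula $\partial_{l,m,\alpha}^{*}L_{-}=mN(M^{m-1}L^{\frac{N+l-1}{N}}R_{\alpha})_{-}$ without displaying how the subleading pieces of $[M^m,L]$ (for $m\geq 2$) are absorbed; you correctly observe that iterated use of $[M,L^{s/N}]=-sL^{(s-1)/N}$ produces a cascade and propose to control it by induction on $m$. That caution is well placed: for $m=1$ there is nothing to do, and this is the only case the paper actually uses downstream (string equation, Virasoro constraints), so your plan to peel off one $M$ at a time is the natural way to make the general-$m$ statement honest.
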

\begin{proof}
For the Lax operator $L$ of the mcGD hierarchy, it has no negative part, that is to say, its negative part is equal to zero.\\
From above, we can know $\partial_{l,m,\alpha}^{*}L=-[(M^{m}L^{\frac{l}{N}}R_{\alpha})_{-},L]$, so let's consider
\begin{equation*}
\partial_{l,m,\alpha}^{*}L_{-}=-[(M^{m}L^{\frac{l}{N}}R_{\alpha})_{-},L]_{-}=-(\phi[\Gamma^{m}A^{\frac{l}{N}}\partial^{l}E_{\alpha},A\partial^{N}]\phi^{-1})_{-},
\end{equation*}
after some deductions, the upper formula can be reduced to
\begin{equation*}
\partial_{l,m,\alpha}^{*}L_{-}=\frac{N}{N+l}\big[L^{\frac{N+l}{N}}R_{\alpha},M^{m}\big]_{-}=mN(M^{m-1}L^{\frac{N+l-1}{N}}R_{\alpha})_{-},\ \ \ \
\end{equation*}
if and only if $(M^{m-1}L^{\frac{N+l-1}{N}})_{-}=0$, we can get
\begin{equation*}
\sum\limits_{\alpha=1}^{n}\partial_{l,m,\alpha}^{*}L_{-}=mN(M^{m-1}L^{\frac{N+l-1}{N}})_{-}=0.
\end{equation*}
\end{proof}
So we just need to thing about the surviving additional flows $\sum\limits_{\alpha=1}^{n}\partial_{l,m,\alpha}^{*}$ which satisfy the condition $(M^{m-1}L^{\frac{N+l-1}{N}})_{-}=0$.\\
\ \ \ \ According to the theorem \ref{theorem-1}, we can find that assuming that a solution of the mcGD hierarchy is defined by the Virasoro condition $\sum\limits_{\alpha=1}^{n}(M^{m-1}L^{\frac{N+l-1}{N}}R_{\alpha})_{-}=0$, if the constraint $(L^{\frac{N+l}{N}})_{-}=0$ is imposed on this solution, it must also satisfy the $W_{1+\infty}$ symmetry $(M^{m})_{-}=0$. That is, the $W_{1+\infty}$ symmetry is compatible both with the Virasoro condition $(M^{m-1}L^{\frac{N+l-1}{N}})_{-}=0$ and the constraint $(L^{\frac{N+l}{N}})_{-}=0$.
\begin{proposition}\label{2.7}
The additional flows $\sum\limits_{\alpha=1}^{n}\partial_{l,m,\alpha}^{*}$ which satisfy the condition $(M^{m-1}L^{\frac{N+l-1}{N}})_{-}=0$ commute with the flows $\partial_{n,\beta}\ (n\neq 0\ mod\ N)\ $ of the multi-component Gelfand-Dickey hierarchy.
\end{proposition}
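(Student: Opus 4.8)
The plan is to verify the commutation directly on the wave operator $\phi$: since $L=\phi A\partial^{N}\phi^{-1}$, $R_\beta=\phi E_\beta\phi^{-1}$, $M=\phi\Gamma\phi^{-1}$ and $W=\phi e^{\xi}$ are all dressed from $\phi$, the identity $[\sum_{\alpha}\partial_{l,m,\alpha}^{*},\partial_{n,\beta}]\phi=0$ propagates to every object of the hierarchy. Abbreviating $B_{n,\beta}=(L^{\frac{n}{N}}R_\beta)_+$, $X=(L^{\frac{n}{N}}R_\beta)_-$ and $Y_\alpha=(M^{m}L^{\frac{l}{N}}R_\alpha)_-$, the Sato equation gives $\partial_{n,\beta}\phi=-X\phi$ and the additional flow gives $\partial_{l,m,\alpha}^{*}\phi=-Y_\alpha\phi$. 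I would expand the commutator on $\phi$ by the Leibniz rule and eliminate the first derivatives of $\phi$ using these two relations; once the first-order terms cancel, the statement reduces to the single pseudo-differential identity
\begin{equation*}
\partial_{n,\beta}\Big(\sum_{\alpha}Y_\alpha\Big)-\sum_{\alpha}\partial_{l,m,\alpha}^{*}X+\Big[X,\sum_{\alpha}Y_\alpha\Big]=0 .
\end{equation*}

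To evaluate the two time derivatives I would invoke the evolution rules already established. The flow $\partial_{n,\beta}$ acts on each of $L$, $R_\alpha$, $M$ as the derivation $[B_{n,\beta},\,\cdot\,]$ — via the Lax equations together with $\partial_{n,\beta}M=[B_{n,\beta},M]$ — hence on fractional powers and on arbitrary products, so $\partial_{n,\beta}(M^{m}L^{\frac{l}{N}}R_\alpha)=[B_{n,\beta},M^{m}L^{\frac{l}{N}}R_\alpha]$; by the preceding Corollary, giving $\partial_{l,m,\alpha}^{*}L=-[Y_\alpha,L]$ and $\partial_{l,m,\alpha}^{*}R_\beta=-[Y_\alpha,R_\beta]$, the flow $\partial_{l,m,\alpha}^{*}$ acts on $L$ and $R_\beta$ as $-[Y_\alpha,\,\cdot\,]$, so $\partial_{l,m,\alpha}^{*}(L^{\frac{n}{N}}R_\beta)=-[Y_\alpha,L^{\frac{n}{N}}R_\beta]$. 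Because the projections $(\cdot)_\pm$ are linear and time independent, they commute with both $\partial_{n,\beta}$ and $\partial_{l,m,\alpha}^{*}$. The key structural point is to sum over $\alpha$ before projecting: by $\sum_\alpha R_\alpha=I$ one has $\sum_\alpha M^{m}L^{\frac{l}{N}}R_\alpha=M^{m}L^{\frac{l}{N}}$, hence $\sum_\alpha Y_\alpha=(M^{m}L^{\frac{l}{N}})_-$, which I denote $Y$. This collapse is exactly what turns the $\alpha$-components into a single generator $M^{m}L^{\frac{l}{N}}$, and is the reason the summed flow, rather than the individual $\partial_{l,m,\alpha}^{*}$, is the natural symmetry to commute with the hierarchy.

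The identity then closes by the standard splitting argument. Writing $L^{\frac{n}{N}}R_\beta=B_{n,\beta}+X$ and $M^{m}L^{\frac{l}{N}}=(M^{m}L^{\frac{l}{N}})_++Y$, I get $\partial_{n,\beta}Y=[B_{n,\beta},Y]_-$, because the commutator $[B_{n,\beta},(M^{m}L^{\frac{l}{N}})_+]$ of two differential operators has no negative part; and $\sum_\alpha\partial_{l,m,\alpha}^{*}X=-[Y,B_{n,\beta}]_--[Y,X]$, because $[Y,X]$ is a commutator of two operators of strictly negative order and so equals its own negative part. Substituting into the displayed identity, the brackets $[B_{n,\beta},Y]_-$ and $[Y,B_{n,\beta}]_-$ cancel, while $[Y,X]$ and $[X,Y]$ cancel, leaving $0$. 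This establishes $[\sum_\alpha\partial_{l,m,\alpha}^{*},\partial_{n,\beta}]\phi=0$, and hence the proposition.

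I expect the main obstacle to be the careful bookkeeping of the projection $(\cdot)_-$: keeping exact track of which commutators are purely differential (and thus drop out under $(\cdot)_-$) and which are of strictly negative order (and thus survive unprojected), together with justifying that $[B_{n,\beta},\,\cdot\,]$ passes through the fractional power $L^{\frac{l}{N}}$ and the power $M^{m}$ as a genuine derivation. Finally I would record that the hypothesis $(M^{m-1}L^{\frac{N+l-1}{N}})_-=0$ of Theorem \ref{theorem-1} is precisely what guarantees that $\sum_\alpha\partial_{l,m,\alpha}^{*}$ is tangent to the mcGD constraint $L_-=0$; it ensures the flow is an honest symmetry of the hierarchy, so that the commutation identity proved on $\phi$ is a relation between genuine flows rather than a merely formal manipulation.
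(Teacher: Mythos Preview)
Your argument is correct and is, at bottom, the same projection/splitting computation the paper carries out: expand the commutator on $\phi$, feed in the Sato equation and the additional-flow equation, and then use that $[B_{n,\beta},(M^{m}L^{l/N})_+]$ has no negative part while $[Y,X]$ has no positive part. One point of comparison worth noting: the paper actually proves the stronger statement that each individual $[\partial_{l,m,\alpha}^{*},\partial_{n,\beta}]\phi=0$ vanishes, for every fixed $\alpha$, and only then sums. Your step of summing over $\alpha$ via $\sum_\alpha R_\alpha=I$ to collapse $\sum_\alpha Y_\alpha$ to $(M^{m}L^{l/N})_-$ is harmless but unnecessary --- exactly the same cancellation goes through with $Y_\alpha$ in place of $Y$, since the argument never uses any special property of $M^{m}L^{l/N}$ beyond its being a pseudodifferential operator whose $\partial_{n,\beta}$-derivative is $[B_{n,\beta},\,\cdot\,]$. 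So your remark that the sum ``is the reason the summed flow, rather than the individual $\partial_{l,m,\alpha}^{*}$, is the natural symmetry to commute with the hierarchy'' overstates things: the commutation on $\phi$ holds componentwise; the sum together with the hypothesis $(M^{m-1}L^{(N+l-1)/N})_-=0$ is needed only for tangency to the constraint $L_-=0$, exactly as you say in your final paragraph.
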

\begin{proof}
\begin{equation*}
\ \ \ \ \ \ \ \ \big[\sum\limits_{\alpha=1}^{n}\partial_{l,m,\alpha}^{*},\partial_{n,\beta}\big]\phi=\Big(\sum\limits_{\alpha=1}^{n}\big[\partial_{l,m,\alpha}^{*},\partial_{n,\beta}\big]\Big)\phi=
\sum\limits_{\alpha=1}^{n}\Big(\big[\partial_{l,m,\alpha}^{*},\partial_{n,\beta}\big]\phi\Big),
\end{equation*}
\begin{equation*}
\begin{split}
&[\partial_{l,m,\alpha}^{*},\partial_{n,\beta}]\phi\\
=&\partial_{l,m,\alpha}^{*}(\partial_{n,\beta}\phi)-\partial_{n,\beta}(\partial_{l,m,\alpha}^{*}\phi)\\
=&-\partial_{l,m,\alpha}^{*}((L^{\frac{n}{N}}R_{\beta})_{-}\phi)+\partial_{n,\beta}((M^{m}L^{\frac{l}{N}}R_{\alpha})_{-}\phi)\\
=&-(\partial_{l,m,\alpha}^{*}(L^{\frac{n}{N}}R_{\beta})_{-})\phi-(L^{\frac{n}{N}}R_{\beta})_{-}(\partial_{l,m,\alpha}^{*}\phi)\\
&+(\partial_{n,\beta}(M^{m}L^{\frac{l}{N}}R_{\alpha})_{-})\phi+(M^{m}L^{\frac{l}{N}}R_{\alpha})_{-}(\partial_{n,\beta}\phi)\\
=&[(M^{m}L^{\frac{l}{N}}R_{\alpha})_{-},(L^{\frac{n}{N}}R_{\beta})_{-}+\partial_{n,\beta}]_{-}\phi+(L^{\frac{n}{N}}R_{\beta})_{-}(M^{m}L^{\frac{l}{N}}R_{\alpha})_{-}\phi\\
&+[\partial_{n,\beta},(M^{m}L^{\frac{l}{N}}R_{\alpha})_{-}]_{-}\phi-(M^{m}L^{\frac{l}{N}}R_{\alpha})_{-}(L^{\frac{n}{N}}R_{\beta})_{-}\phi\\
=&[(L^{\frac{n}{N}}R_{\beta})_{-},(M^{m}L^{\frac{l}{N}}R_{\alpha})_{-}]\phi-[(L^{\frac{n}{N}}R_{\beta})_{-},(M^{m}L^{\frac{l}{N}}R_{\alpha})_{-}]_{-}\phi\\
=&0,
\end{split}
\end{equation*}
then we can obtain $\big[\sum\limits_{\alpha=1}^{n}\partial_{l,m,\alpha}^{*},\partial_{n,\beta}\big]\phi=0$.\\
From the above calculation we can find they commute on $\phi$, so they will commute on the whole differential algebra generated by coefficients of $\phi$. The proposition is proved.
\end{proof}
A direct application of the additional symmetry of the mcGD hierarchy is to derive the string equation which appears in the study of the string theory. First of all, we found the following relationships between the Lax operators and the Orlov-Shulman's operator of the mcGD hierarchy\\[5pt]
$[L^{\frac{1}{N}},M]=\phi[A^{\frac{1}{N}}\partial,\Gamma]\phi^{-1}=\phi(A^{\frac{1}{N}}\partial\Gamma)\phi^{-1}=\phi(A^{\frac{1}{N}}A^{-{\frac{1}{N}}})\phi^{-1}=I$,\\[5pt]
$[R_{\alpha},M]=\phi[E_{\alpha},\Gamma]\phi^{-1}=O.$\\[5pt]
Then, we can get
\begin{equation}\label{o}
[L^{\frac{n}{N}},M]=\phi[A^{\frac{n}{N}}\partial^{n},\Gamma]\phi^{-1}=\phi(A^\frac{n}{N}nA^{-\frac{1}{N}}\partial^{n-1})\phi^{-1}=nL^{\frac{n-1}{N}}.
\end{equation}
Furthermore, we can obtain
$[L^{\frac{n}{N}},ML^{-\frac{n-1}{N}}R_{\alpha}]=nR_{\alpha}.$\\
Next, we consider a special additional flow $\big(l=-(n-1),\ l-1=0\ mod\ N\big)$
\begin{equation*}
\sum\limits_{\alpha=1}^{n}\partial_{-(n-1),1,\alpha}^{*}L^{\frac{n}{N}}=-\sum\limits_{\alpha=1}^{n}\big[(ML^{-\frac{n-1}{N}}R_{\alpha})_{-},L^{\frac{n}{N}}\big],\ \ n=0\ mod\ N.
\end{equation*}
Combined with the formula \eqref{o}, we can rewrite it to
\begin{equation*}
\sum\limits_{\alpha=1}^{n}\partial_{-(n-1),1,\alpha}^{*}L^{\frac{n}{N}}=\sum\limits_{\alpha=1}^{n}\big[(ML^{-\frac{n-1}{N}}R_{\alpha})_{+},L^{\frac{n}{N}}\big]+nI, \ \ \ \ \ \ \ \ \ \ \ \ \ \ \
\end{equation*}
and when $n=0\ mod\ N$, there is $L^{\frac{n}{N}}=(L^{\frac{n}{N}})_{+},$
therefore
\begin{equation*}
\sum\limits_{\alpha=1}^{n}\big[L^{\frac{n}{N}},\frac{1}{n}(ML^{-\frac{n-1}{N}}R_{\alpha})_{+}\big]=I. \ \ \ \ \ \ \ \ \ \ \ \ \ \ \ \ \ \ \ \ \ \ \ \ \ \ \ \ \ \ \ \ \ \ \ \ \ \ \ \ \ \
\end{equation*}
so
\begin{equation}\label{p}
\Big[L^{\frac{n}{N}},\frac{1}{n}\big(\sum\limits_{\alpha=1}^{n}(ML^{-\frac{n-1}{N}}R_{\alpha})_{+}\big)\Big]=I. \ \ \ \ \ \ \ \ \ \ \ \ \ \ \ \ \ \ \ \ \ \ \ \ \ \ \ \ \ \ \ \ \ \ \ \ \ \ \ \ \ \
\end{equation}
The equation \eqref{p} is called the string equation of the mcGD hierarchy.
From the above deduction process, we can find that the string equation refers to the condition that the operator is independent of the additional variables.\\

\subsection{$\tau$ function and Virasoro constraint of the mcGD hierarchies}
\ \ \ \ The existence theorem of the $\tau$ function is given below.
\begin{theorem}\cite{20}
Suppose the $\tau$ function is a matrix $\tau=(\tau_{\alpha,\beta})$ and $\tau_{\alpha,\alpha}$ is abbreviated as $\tau$, then there exists a function $\tau(\cdots,t_{s,\gamma},\cdots)$, which makes
\begin{equation}
\begin{cases}
\omega_{\alpha,\alpha}(t,z)=\frac{\tau(\cdots,t_{s,\gamma}-\delta_{\gamma,\beta}\frac{1}{sz^{s}},\cdots)}{\tau(\cdots,t_{s,\gamma},\cdots)},\\[5pt]
\omega_{\alpha,\beta}(t,z)=\frac{\tau_{\alpha,\beta}(\cdots,t_{s,\gamma}-\delta_{\gamma,\beta}\frac{1}{sz^{s}},\cdots)}{z\cdot\tau(\cdots,t_{s,\gamma},\cdots)},\ \alpha\neq\beta,\ \ s\neq 0\ mod\ N,
\end{cases}
\end{equation}
only when $\gamma=\beta$, the time variable $t_{s,\gamma}$ will be moved.\\
 Except for the above case $s\neq 0\ mod\ N$, the other cases all satisfy
\begin{equation}
\sum\limits_{\alpha=1}^{n}\partial_{kN,\alpha}\tau=0,\ \ k=1,2,\cdots.
\end{equation}
\end{theorem}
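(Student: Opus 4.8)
The plan is to obtain $\tau$ from the bilinear identity of the hierarchy and then read off the Miwa-shift formulas, treating the diagonal and off-diagonal entries of $\omega$ separately. First I would establish the bilinear identity for the mcGD wave functions, namely that for all pairs of time sequences $t,t'$ one has $\res_z\, W(t,z)\,W^{*}(t',z)^{\mathrm{T}}=0$. At $t=t'$ this is immediate from the fundamental lemma relating the $z$-residue of $\big(Pe^{\xi}\big)\big(Qe^{-\xi}\big)$ to the $\partial^{-1}$-coefficient of $PQ^{*}$: since $W\,W^{*\mathrm{T}}$ corresponds to $\phi\phi^{-1}=I$, which has no $\partial^{-1}$ term, the residue vanishes. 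For $t\neq t'$ I would propagate this identity along the flows using the Sato equations and the wave-function relations $\partial_{n,\alpha}W=(L^{\frac{n}{N}}R_{\alpha})_{+}W$, exactly as those relations were themselves derived by dressing.

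Second, I would prove the existence of a single potential. Writing $W=\omega(t,z)e^{\xi(t,z)}$ with $\omega=I+O(z^{-1})$, the bilinear identity yields, entry by entry, expressions for the logarithmic $t_{n,\alpha}$-derivatives of $\omega$ as generating series in $z$. The crucial point is that these doubly-indexed expressions form a closed form in the variables $t_{n,\alpha}$, so that a potential $\log\tau$ exists; this closedness is precisely the compatibility of the commuting flows, i.e.\ the zero curvature equation. The diagonal entries then integrate to a single scalar $\tau:=\tau_{\alpha,\alpha}$, while the off-diagonal closedness supplies the matrix numerators $\tau_{\alpha,\beta}$ sharing the common denominator $\tau$. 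Once $\tau$ exists, I would verify the stated formulas by expanding the Miwa shift $t_{s,\gamma}\mapsto t_{s,\gamma}-\delta_{\gamma,\beta}\frac{1}{sz^{s}}$, which acts only on the $\beta$-column of times and is the multi-component analogue of $t\mapsto t-[z^{-1}]$; its Schur-type Taylor expansion reproduces the logarithmic derivatives of $\tau$ computed from the bilinear identity. The extra factor $z^{-1}$ for $\alpha\neq\beta$ is forced because $\alpha_{0}=I$ is diagonal, so $\omega_{\alpha,\beta}$ begins at order $z^{-1}$, whereas $\omega_{\alpha,\alpha}=1+O(z^{-1})$.

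The second relation is the most direct part: the Sato equation gives $\sum_{\alpha=1}^{n}\partial_{jN,\alpha}\phi=0$, so the combined $jN$-flows act trivially on $\omega$, hence on every entry defining $\tau$; therefore $\sum_{\alpha=1}^{n}\partial_{kN,\alpha}\tau=0$, which is the Gelfand-Dickey reduction appearing at the level of $\tau$. I expect the main obstacle to be the integrability step in the matrix-valued setting: proving that the logarithmic derivatives of $\omega$, carrying the two indices $(n,\alpha)$ and non-commuting matrix coefficients, genuinely descend to one scalar potential $\tau$ together with a compatible matrix $(\tau_{\alpha,\beta})$, rather than only to row- or column-wise potentials. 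Adapting the scalar KP argument past this non-commutativity, while keeping the normalization consistent with the $z^{-1}$ off-diagonal shift, is where the care is needed.
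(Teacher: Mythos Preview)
The paper does not actually prove this theorem: it is stated with an explicit citation to \cite{20} (Dickey) and no proof is supplied in the paper itself. The authors simply import the existence result for the matrix $\tau$-function from the literature and then use it to study the additional symmetries acting on $\tau$. Consequently there is no ``paper's own proof'' to compare your proposal against.

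That said, your outline is the standard route taken in the cited reference and in the surrounding literature: one establishes the bilinear identity $\res_z W(t,z)W^{*}(t',z)^{\mathrm{T}}=0$ by the fundamental lemma at $t=t'$ and propagates it along the flows, then extracts a closed one-form whose potential is $\log\tau$, and finally checks that the Miwa shift $t_{s,\gamma}\mapsto t_{s,\gamma}-\delta_{\gamma,\beta}\frac{1}{sz^{s}}$ reproduces the correct generating series for $\omega_{\alpha,\beta}$. Your remark that the reduction condition $\sum_{\alpha}\partial_{kN,\alpha}\tau=0$ descends directly from the Sato equation $\sum_{\alpha}\partial_{jN,\alpha}\phi=0$ is also correct. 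The caution you raise about the non-commutative, doubly-indexed integrability step is exactly where the work lies in the multi-component setting; Dickey handles it by working componentwise in the $\beta$-column and using the commutativity of the flows, which is what your zero-curvature remark anticipates. So your plan is sound and aligned with the source the paper defers to, even though the present paper itself offers nothing to compare it with.
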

Next, starting from the additional symmetry of the wave operator $\phi$ and combining with the existence theorem of the $\tau$ function, we study the additional symmetry of the $\tau$ function.
To facilitate subsequent calculations, we first split $\sum\limits_{\alpha=1}^{n}\partial_{l,1,\alpha}^{*}\phi\ (l-1=0\ mod\ N)$ appropriately, where
\begin{equation*}
\begin{split}
\sum\limits_{\alpha=1}^{n}\partial_{l,1,\alpha}^{*}\phi=-\sum\limits_{\alpha=1}^{n}(ML^{\frac{l}{N}}R_{\alpha})_{-}\phi=-\sum\limits_{\alpha=1}^{n}(\phi\sum\limits_{k=1}^{\infty}
kt_{k,\alpha}A^{\frac{k+l-1}{N}}\partial^{k+l-1}E_{\alpha}\phi^{-1})_{-}\phi,\\
\ \ \ \ \ \ \ \ \ \ \ \ \ \ \ \ \ \ \ \ \ \ \ \ \ \ \ \ \ \ \ \ \ k\neq 0\ mod\ N,
\end{split}
\end{equation*}
based on the value of $k$, we can divide $(\phi\sum\limits_{k=1}^{\infty}
kt_{k,\alpha}A^{\frac{k+l-1}{N}}\partial^{k+l-1}E_{\alpha}\phi^{-1})_{-}\phi\ (l<0)$ of the above formula into three parts
\begin{equation*}
O=[\phi,t_{1,\alpha}]A^{\frac{l}{N}}\partial^{l}E_{\alpha}+\sum\limits_{k=1}^{-l}
kt_{k,\alpha}\phi A^{\frac{k+l-1}{N}}\partial^{k+l-1}E_{\alpha},
\end{equation*}
\begin{equation*}
P=(\phi(-l+1)t_{-l+1,\alpha}E_{\alpha}\phi^{-1})_{-}\phi=(-l+1)[\phi,t_{-l+1,\alpha}E_{\alpha}],
\end{equation*}
\begin{equation*}
Q=-\sum\limits_{k=-l+2}^{\infty}kt_{k,\alpha}\partial_{k+l-1,\alpha}\phi,
\end{equation*}
from the above splitting results, we can get
\begin{equation}
\begin{split}
\sum\limits_{\alpha=1}^{n}\partial_{l,1,\alpha}^{*}\omega=&-z^{l}\partial_{z}\omega I-\sum\limits_{\alpha=1}^{n}\big(\sum\limits_{k=1}^{-l}kt_{k,\alpha}z^{k+l-1}\omega E_{\alpha}+(-l+1)[\omega,t_{-l+1,\alpha}E_{\alpha}]\ \ \\
&-\sum\limits_{k=-l+2}^{\infty}kt_{k,\alpha}\partial_{k+l-1,\alpha}\omega\big), \ \ \ k\neq 0\ mod\ N,\ \ \ \ \ \ \ \ \ \ \ \
\end{split}
\end{equation}
accordingly, we can obtain
\begin{equation}
\begin{split} \ \ \sum\limits_{\alpha=1}^{n}\partial_{l,1,\alpha}^{*}\omega_{\gamma,\beta}=&\sum\limits_{\alpha=1}^{n}\big(-z^{l}\partial_{z}\omega_{\gamma,\beta}\delta_{\alpha,\beta}-\sum\limits_{k=1}^{-l}kt_{k,\alpha}z^{k+l-1}\omega_{\gamma,\beta}\delta_{\alpha,\beta}-(-l+1)[\omega,t_{-l+1,\alpha}\delta_{\alpha,\beta}]\\
&+\sum\limits_{k=-l+2}^{\infty}kt_{k,\alpha}\partial_{k+l-1,\alpha}\omega_{\gamma,\beta}\big),\ \ \ k\neq 0\ mod\ N.\ \ \ \
\end{split}
\end{equation}
After the above study of relevant knowledge, it is easy to obtain the proposition on the additional symmetry of the $\tau$ function.
\begin{proposition}\label{r}\cite{20}
The action of additional flows $\sum\limits_{\alpha=1}^{n}\partial_{l,1,\alpha}^{*}\ (l-1=0\ mod\ N)$ with $l<0$ on $(\tau_{\gamma,\beta})$ is given by
\begin{equation}
\begin{split}
\sum\limits_{\alpha=1}^{n}\partial_{l,1,\alpha}^{*}\tau_{\gamma,\beta}=\sum\limits_{\alpha=1}^{n}(\sum\limits_{k=-l+1}^{\infty}kt_{k,\alpha}\partial_{k+l-1,\alpha}+\frac{1}{2}
&\sum\limits_{k+s=-l+1}kst_{k,\alpha}t_{s,\alpha})\tau_{\gamma,\beta}+(\sum\limits_{\alpha=1}^{n}c_{\alpha,\beta})\tau_{\gamma,\beta}.
\end{split}
\end{equation}
where $t_{k,\alpha},\ t_{s,\alpha}$ are arguments of the element $\tau_{\gamma,\beta}$.
\end{proposition}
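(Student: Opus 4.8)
The plan is to derive the $\tau$-function action from the already-established action on the dressing coefficient $\omega$, using the existence theorem as a dictionary. Write the shift appearing there as $t_{s,\gamma}\mapsto t_{s,\gamma}-\delta_{\gamma,\beta}\tfrac{1}{sz^{s}}$ (for $s\neq 0\ \mathrm{mod}\ N$), abbreviated $t-[z]_\beta$, so that $\omega_{\gamma,\beta}(t,z)=\tau_{\gamma,\beta}(t-[z]_\beta)/(z^{1-\delta_{\gamma,\beta}}\tau(t))$. Since the additional times $t^{*}$ are independent of the hierarchy times $t$, the operator $\sum_{\alpha}\partial_{l,1,\alpha}^{*}$ commutes with the shift $t\mapsto t-[z]_\beta$. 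Dividing the displayed formula for $\sum_{\alpha}\partial_{l,1,\alpha}^{*}\omega_{\gamma,\beta}$ through by $\omega_{\gamma,\beta}$ and substituting the $\tau$-representation then converts it into the functional identity
\begin{equation*}
\frac{\big(\sum_{\alpha}\partial_{l,1,\alpha}^{*}\tau_{\gamma,\beta}\big)(t-[z]_\beta)}{\tau_{\gamma,\beta}(t-[z]_\beta)}-\frac{\sum_{\alpha}\partial_{l,1,\alpha}^{*}\tau(t)}{\tau(t)}=\mathcal R_{\gamma,\beta}(t,z),
\end{equation*}
where $\mathcal R_{\gamma,\beta}$ is obtained from the right-hand side of the known $\omega_{\gamma,\beta}$-formula by dividing out $\omega_{\gamma,\beta}$; thus $\partial_z\omega$ and the flow-derivatives of $\omega$ become logarithmic derivatives of the $\tau$-representation, while the explicit polynomial term survives unchanged.

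I would treat the diagonal case $\gamma=\beta$ first, where $\tau_{\beta,\beta}=\tau$ and the identity reduces to $G(t-[z]_\beta)-G(t)=\mathcal R_{\beta,\beta}$ for the single unknown $G=\sum_{\alpha}\partial_{l,1,\alpha}^{*}\tau/\tau$. The strategy is to make the ansatz that $G$ equals the operator in the statement applied to $\log\tau$ together with the quadratic and constant terms, and then to verify the functional equation by Taylor-expanding the shift, organising the match by powers of $z$. The flow-derivative piece $\sum_{k\geq -l+2}kt_{k,\alpha}\partial_{k+l-1,\alpha}$ of $\mathcal R$ reproduces the first-order operator $\sum_{k\geq -l+1}kt_{k,\alpha}\partial_{k+l-1,\alpha}$, the missing boundary term $k=-l+1$ being furnished by the $E_\alpha$-commutator term with coefficient $(-l+1)$; the $-z^{l}\partial_z$ term fixes the $z$-dependence generated by $\partial_z(t-[z]_\beta)=\sum_s z^{-s-1}\partial_{s,\beta}$; and the remaining $z$-independent, $t$-independent residue is absorbed into the integration constant $\sum_{\alpha}c_{\alpha,\beta}$.

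The off-diagonal entries are then handled by the same identity: with $G$ already determined from the diagonal computation, the identity becomes a linear equation for $\sum_{\alpha}\partial_{l,1,\alpha}^{*}\tau_{\gamma,\beta}$, and the factor $z^{1-\delta_{\gamma,\beta}}$ together with the off-diagonal part of the commutator term accounts for the possibly different constant $c_{\alpha,\beta}$ while leaving the differential and quadratic parts unchanged.

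The step I expect to be the main obstacle is the emergence of the quadratic ``anomaly'' term $\tfrac12\sum_{k+s=-l+1}kst_{k,\alpha}t_{s,\alpha}$. It does not come from any explicitly quadratic object in the $\omega$-action; rather, it is forced by requiring that the first-order variation of this term under the shift $t\mapsto t-[z]_\beta$ reproduce the explicit polynomial terms $-\sum_{k=1}^{-l}kt_{k,\alpha}z^{k+l-1}$ in $\mathcal R$. Concretely, varying $t_{s,\alpha}\mapsto t_{s,\alpha}-\tfrac{1}{sz^{s}}$ in $\tfrac12\sum_{k+s=-l+1}kst_{k,\alpha}t_{s,\alpha}$ and using the symmetry $k\leftrightarrow s$ to combine the two contributions yields exactly $-\sum_{k=1}^{-l}kt_{k,\alpha}z^{k+l-1}$, the factor $\tfrac12$ being cancelled by the symmetrisation. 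The delicate points are getting this factor of two and the summation range $1\leq k\leq -l$ correct, and checking that the second-order (shift-squared) contribution is consistent with the constant term; everything else is routine Laurent-series bookkeeping.
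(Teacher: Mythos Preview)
Your proposal is correct and follows precisely the standard Dickey argument that the paper invokes: the paper does not give an independent proof here but, after deriving the componentwise action on $\omega_{\gamma,\beta}$ (equations (2.25)--(2.26)), simply states that the passage to $\tau_{\gamma,\beta}$ is carried out ``by the method similar to that in the literature \cite{20}''. Your outline---substituting the $\tau$-representation of $\omega_{\gamma,\beta}$, reducing to a difference identity $G(t-[z]_\beta)-G(t)=\mathcal R$, and recovering the quadratic anomaly $\tfrac12\sum_{k+s=-l+1}kst_{k,\alpha}t_{s,\alpha}$ from the shift of the polynomial term---is exactly that method, and your identification of the symmetrisation cancelling the factor $\tfrac12$ is the key bookkeeping step.
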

The method of proving the above proposition is similar to that in the literature\cite{20}, and it will not be repeated here.
Because the string equation is the condition that the operator is independent of the additional variables, we can also obtain the condition that the operators are independent of the additional variables by the way of proving the above proposition, as shown below,
\begin{equation}
L_{l}\tau=0,\ \ l-1=0\ mod\ N,\ l<0,\ \ \ \ \ \ \ \ \ \ \ \ \ \ \ \ \ \ \ \ \ \ \ \ \ \ \ \ \ \ \ \ \ \ \ \ \
\end{equation}
where
\begin{equation*}
\begin{split}
L_{l}=\sum\limits_{\alpha=1}^{n}\big(\sum\limits_{k=-l+1}^{\infty}kt_{k,\alpha}&\partial_{k+l-1,\alpha}+\frac{1}{2}\sum\limits_{k+s=-l+1}kst_{k,\alpha}t_{s,\alpha}+c_{\alpha,\beta}\big),
\end{split}
\end{equation*}
it is obviously equivalent to the string equation, We call it the Virasoro constraint of the mcGD hierarchy. And $L_{l}$ satisfies the Virasoro exchange relation
\begin{equation*}
[L_{-m},L_{-n}]=(-m+n)L_{-(m+n)},\ \ m,n=1,2,\cdots.\ \ \ \
\end{equation*}
There are non-autonomous ODEs, related to the solutions of the Virasoro constraints, similar to the Painlev$\acute{e}$ equations appearing in the theory of equations NLS and KdV. Next, we'll give a brief description based on the classical KdV equation.
\begin{example}\cite{23}
The classical KdV equation
\begin{equation}\label{g}
u_{t}+6uu_{x}+u_{xxx}=0
\end{equation}
can be obtained by Lax pairs $L=A\partial^{2}+u_{0}$ and $\mathbb{A}=-4\partial^{3}-3u\partial-3u_{x}$, where $A=diag(1),\ u_{0}=diag(u(x,t))$.
Now we introduce a Virasoro operator
\begin{equation*}
\hat{L}=L_{-1}=\frac{t_{1,1}^{2}}{2}+\sum\limits_{k=2}^{\infty}kt_{k,\alpha}\partial_{k-2,\alpha}.
\end{equation*}
Let $\tau(\mathbf{x})$ be a solution of the mcGD hierarchy, which satisfies $\hat{L}\tau(\mathbf{x})=0,$ add the limit $\frac{\partial\tau}{\partial t_{2i,1}}=0,\ (i=1,2,\cdots)$ to $\tau(\mathbf{x})$ to make it also a solution of the mcKdV hierarchy. For example, $\tau(\mathbf{x})$ satisfies the bilinear differential equation
\begin{equation}
(D_{t_{1,1}}^{4}+D_{t_{1,1}}D_{t_{3,1}})\tau\cdot\tau=0
\end{equation}
of equation\eqref{g}
Then let $\delta(\mathbf{x})=\tau(\mathbf{x})$ under the constraints $t_{1,1}=x,\ t_{5,1}=-\frac{1}{5},\ and\ t_{j,1}=0,\ (j\neq 1,5)$, it can be found that $\delta(\mathbf{x})$ satisfies the bilinear differential equation
\begin{equation}\label{m1}
(D_{x}^{4}-2x)\delta\cdot\delta=0.
\end{equation}
We take $p=-\frac{d^{2}log(\delta)}{dx^{2}}$, then equation\eqref{m1} can be rewritten into the first Painlev$\acute{e}$ equation $P_{\uppercase\expandafter{\romannumeral1}}$
\begin{equation*}
\frac{d^{2}p}{dx^{2}}=6p^{2}+x.
\end{equation*}
\end{example}

\section{B type multi-component Gelfand-Dickey hierarchies}
\ \ \ \ The second part of this paper has given the Lax operators of the mcGD hierarcy
\begin{equation*}
\ \ \ \ \ \ \ \ \ L=A\partial^{N}+u_{2}\partial^{N-2}+u_{3}\partial^{N-3}+\cdots +u_{0},
\end{equation*}
\begin{equation*}
R_{\alpha}=\sum\limits_{i=1}^{\infty}R_{i\alpha}\partial^{-i},\ \ \alpha=1,2,\cdots ,n,
\end{equation*}
if $L$ and $R_{\alpha}$ also satisfy $L^{*}=-\partial L\partial^{-1},\ R_{\alpha}^{*}=\partial R_{\alpha}\partial^{-1}$, then $L$ and $R_{\alpha}$ at this time are called the Lax operators of the mcBGD hierarchy.\\
From the above Lax operators, we can define the Lax equations of the mcBGD hierarchy
\begin{equation*}
\partial_{n,\alpha}L=[B_{n,\alpha},L],\\ \ \ \ \ \ \
\partial_{n,\alpha}R_{\beta}=[B_{n,\alpha},R_{\beta}],\ \ n\neq 0\ mod\ N,\ \ \ \ \ \ \
\end{equation*}
\begin{equation}
\sum\limits_{\alpha=1}^{n}\partial_{jN,\alpha}L=0,\ \ \ \ \ \ \ \ \
\sum\limits_{\alpha=1}^{n}\partial_{jN,\alpha}R_{\beta}=0,\ \ j=1,2,\cdots,\ \ \ \ \ \ \ \ \ \ \ \
\end{equation}
where $L^{*}=-\partial L\partial^{-1},\ R_{\alpha}^{*}=\partial R_{\alpha}\partial^{-1}$, $\partial_{n,\alpha}=\frac{\partial}{\partial t_{n,\alpha}}$ and $B_{n,\alpha}=(L^{\frac{n}{N}}R_{\alpha})_{+}$.
\begin{proposition}
The B type multi-component Gelfand-Dickey hierarchy has only odd flows.
\end{proposition}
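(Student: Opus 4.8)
The plan is to read off, from the two B-type constraints $L^{*}=-\partial L\partial^{-1}$ and $R_{\alpha}^{*}=\partial R_{\alpha}\partial^{-1}$, the formal-adjoint behaviour of the generator $L^{\frac{n}{N}}R_{\alpha}$ of the $(n,\alpha)$-flow, and then to test which such flows are compatible with the antisymmetry of $L$. First I would fix the adjoint of the fractional power. For integer powers the constraint gives $(L^{k})^{*}=(L^{*})^{k}=(-1)^{k}\partial L^{k}\partial^{-1}$, and I would extend this to $(L^{\frac{n}{N}})^{*}=(-1)^{n}\partial L^{\frac{n}{N}}\partial^{-1}$ by checking that the right-hand side, raised to the $N$-th power, reproduces $(L^{*})^{n}$ and has the correct leading symbol, so that by uniqueness of the $N$-th root it must equal $(L^{\frac{n}{N}})^{*}$. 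Because $E_{\alpha}$ is a constant matrix commuting with $A^{\frac{1}{N}}\partial$, one has $[L^{\frac{1}{N}},R_{\alpha}]=0$, and combining this with $R_{\alpha}^{*}=\partial R_{\alpha}\partial^{-1}$ yields the key identity $(L^{\frac{n}{N}}R_{\alpha})^{*}=(-1)^{n}\,\partial\,(L^{\frac{n}{N}}R_{\alpha})\,\partial^{-1}$. Thus the generator is B-antisymmetric for odd $n$ and B-symmetric for even $n$.

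Next I would impose that the Lax flow $\partial_{n,\alpha}L=[B_{n,\alpha},L]$, with $B_{n,\alpha}=(L^{\frac{n}{N}}R_{\alpha})_{+}$, preserve the defining constraint. Differentiating $L^{*}+\partial L\partial^{-1}=0$ along the flow, using the Lax equation, and substituting $\partial L\partial^{-1}=-L^{*}$ reduces the preservation requirement to $[\,\partial B_{n,\alpha}\partial^{-1}+B_{n,\alpha}^{*}\,,\,L^{*}\,]=0$. Using that the $(\cdot)_{+}$ projection commutes with the formal adjoint and the elementary rule $\partial X\partial^{-1}=X+X^{(1)}\partial^{-1}$ (coefficientwise differentiation), I would evaluate $\partial B_{n,\alpha}\partial^{-1}+B_{n,\alpha}^{*}$ from the identity of the previous paragraph. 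For odd $n$ all the positive-order contributions cancel and only an admissible purely negative-order remainder survives, so the flow respects the constraint; for even $n$ the same computation leaves a genuine operator of positive order that does not commute with $L^{*}$, so the constraint can be preserved only if the flow acts trivially.

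The conclusion is then that every even-index flow is forced to be trivial on the B-reduced manifold, while the odd-index flows are exactly those consistent with $L^{*}=-\partial L\partial^{-1}$; the combined flows $\sum_{\alpha=1}^{n}\partial_{jN,\alpha}$ fall under the same dichotomy according to the parity of $jN$. I expect the main obstacle to be the bookkeeping in the two technical points: establishing the fractional-power adjoint $(L^{\frac{n}{N}})^{*}=(-1)^{n}\partial L^{\frac{n}{N}}\partial^{-1}$ with the correct sign, which hinges on the parity interplay between $n$ and $N$ in the $N$-th–root identity, and controlling the residue-type correction produced by the $(\cdot)_{+}$ projection, since it is precisely the vanishing of its positive-order part for odd $n$ that makes the surviving-flow dichotomy clean.
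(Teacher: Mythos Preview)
Your approach is essentially the paper's: compute the adjoint of the flow generator from the B-constraints, then test compatibility of the Lax flow with $L^{*}=-\partial L\partial^{-1}$ (and $R_{\beta}^{*}=\partial R_{\beta}\partial^{-1}$). The paper obtains exactly your intermediate identities $(L^{\frac{n}{N}})^{*}=(-1)^{n}\partial L^{\frac{n}{N}}\partial^{-1}$ and $(L^{\frac{n}{N}}R_{\alpha})^{*}=(-1)^{n}\partial R_{\alpha}L^{\frac{n}{N}}\partial^{-1}$, and your differentiated constraint $[\partial B_{n,\alpha}\partial^{-1}+B_{n,\alpha}^{*},L^{*}]=0$ is precisely what the paper gets by computing $\partial_{n,\alpha}L^{*}$ in two ways.

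The one place you make your life harder than the paper does is the endgame. Rather than expanding $\partial B_{n,\alpha}\partial^{-1}+B_{n,\alpha}^{*}$ coefficientwise via $\partial X\partial^{-1}=X+X'\partial^{-1}$ and tracking ``residue-type corrections'', the paper simply computes $B_{n,\alpha}^{*}$ directly: since both the formal adjoint and conjugation by $\partial$ commute with the projection $(\cdot)_{+}$, applying $(\cdot)_{+}$ to $(L^{\frac{n}{N}}R_{\alpha})^{*}=(-1)^{n}\partial(L^{\frac{n}{N}}R_{\alpha})\partial^{-1}$ gives $B_{n,\alpha}^{*}=(-1)^{n}\partial B_{n,\alpha}\partial^{-1}$ in one line. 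Comparing this with the compatibility requirement $B_{n,\alpha}^{*}=-\partial B_{n,\alpha}\partial^{-1}$ forces $(-1)^{n}=-1$, i.e.\ $n$ odd. So the parity dichotomy drops out algebraically without any order-by-order cancellation analysis, and the technical obstacle you flag does not actually arise.
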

\begin{proof}
From the foregoing, we can see that the Lax equations of the mcBGD hierarchy are
\begin{equation*}
\partial_{n,\alpha}L=[B_{n,\alpha},L],\\ \ \ \ \ \ \
\partial_{n,\alpha}R_{\beta}=[B_{n,\alpha},R_{\beta}],\ \ n\neq 0\ mod\ N,\ \ \ \ \ \ \
\end{equation*}
where $L^{*}=-\partial L\partial^{-1},\ R_{\alpha}^{*}=\partial R_{\alpha}\partial^{-1}$, $\partial_{n,\alpha}=\frac{\partial}{\partial t_{n,\alpha}}$ and $B_{n,\alpha}=(L^{\frac{n}{N}}R_{\alpha})_{+}$.\\
Combined with the constraints of its Lax equations, we can get
\begin{equation*}
(L^{\frac{n}{N}})^{*}=(L^{*})^{\frac{n}{N}}=(-1)^{n}\partial L^{\frac{n}{N}}\partial^{-1},
\end{equation*}
\begin{equation*}
\ \ \ \ \ \ \ \ \ \ \ \ (L^{\frac{n}{N}}R_{\alpha})^{*}=R_{\alpha}^{*}(L^{*})^{\frac{n}{N}}=(-1)^{n}\partial R_{\alpha}L^{\frac{n}{N}}\partial^{-1}.
\end{equation*}
Then we simplify $\partial_{n,\alpha}L^{*}$ and $\partial_{n,\alpha}R_{\alpha}^{*}$ in two ways
\begin{equation}\label{a}
\begin{split}
\partial_{n,\alpha}L^{*}=(\partial_{n,\alpha}L)^{*}=[B_{n,\alpha},L]^{*}=-\partial [B_{n,\alpha},L]\partial^{-1}=[\partial B_{n,\alpha}\partial^{-1},L^{*}],\ \ \ \ \ \ \ \ \ \ \ \ \ \ \ \ \ \ \ \ \ \ \ \\
\partial_{n,\alpha}L^{*}=[B_{n,\alpha},L]^{*}=(B_{n,\alpha}L-L B_{n,\alpha})^{*}=L^{*}B_{n,\alpha}^{*}-B_{n,\alpha}^{*}L^{*}=[-B_{n,\alpha}^{*},L^{*}],\ \ \ \ \ \ \ \ \ \ \ \ \ \\[-15pt]
\end{split}
\end{equation}
\begin{equation}\label{b}
\begin{split}
\partial_{n,\alpha}R_{\beta}^{*}=(\partial_{n,\alpha}R_{\beta})^{*}=[B_{n,\alpha},R_{\beta}]^{*}=\partial [B_{n,\alpha},R_{\beta}]\partial^{-1}=[\partial B_{n,\alpha}\partial^{-1},R_{\beta}^{*}],\ \ \ \ \ \ \ \ \ \ \ \ \ \ \\
\partial_{n,\alpha}R_{\beta}^{*}=[B_{n,\alpha},R_{\beta}]^{*}=(B_{n,\alpha}R_{\beta}-R_{\beta} B_{n,\alpha})^{*}=R_{\beta}^{*}B_{n,\alpha}^{*}-B_{n,\alpha}^{*}R_{\beta}^{*}=[-B_{n,\alpha}^{*},R_{\beta}^{*}],
\end{split}
\end{equation}
comparing equation \eqref{a} and \eqref{b}, we can find
\begin{equation}\label{c}
\ \ \ \ \ B_{n,\alpha}^{*}=-\partial B_{n,\alpha}\partial^{-1},\ \ n\neq 0\ mod\ N.
\end{equation}
Then we start with the definition of $B_{n,\alpha}\ (n\neq 0\ mod\ N)$ and solve its adjoint
\begin{equation}\label{d}
B_{n,\alpha}^{*}=((L^{\frac{n}{N}}R_{\alpha})_{+})^{*}=(-1)^{n}\partial (R_{\alpha}L^{\frac{n}{N}})_{+}\partial^{-1}=(-1)^{n}\partial B_{n,\alpha}\partial^{-1},\ \ n\neq 0\ mod\ N.\
\end{equation}
By analyzing equation \eqref{c} and \eqref{d}, we can find that $n$ can only take odd numbers.
\end{proof}
Therefore, a more concise form of the Lax equations of the mcBGD hierarchy can be obtained
\begin{equation*}
\partial_{2n+1,\alpha}L=[B_{2n+1,\alpha},L],\\ \ \ \ \ \partial_{2n+1,\alpha}R_{\beta}=[B_{2n+1,\alpha},R_{\beta}],\ \ 2n+1\neq 0\ mod\ N,
\end{equation*}
\begin{equation}
\sum\limits_{\alpha=1}^{n}\partial_{(2j^{'}+1)N,\alpha}L=0,\ \ \ \ \ \ \ \
\sum\limits_{\alpha=1}^{n}\partial_{(2j^{'}+1)N,\alpha}R_{\beta}=0,\ \ j^{'}=0,1,2,\cdots,\ \ \ \ \
\end{equation}
where $\partial_{2n+1,\alpha}=\frac{\partial}{\partial t_{2n+1,\alpha}}$ and $B_{2n+1,\alpha}=(L^{\frac{2n+1}{N}}R_{\alpha})_{+}$.\\
The Lax operators of the mcBGD hierarchy can also be expressed in dressing form
\begin{equation*}
L=\phi A\partial^{N}\phi^{-1},\ \ \ \ \ \ \ \ \ \ R_{\alpha}=\phi E_{\alpha}\phi^{-1},\ \ \ \ \ \ \ \ \ \ \ \ \
\end{equation*}
where dressing operator $\phi=\phi(A\partial^{N})=\sum\limits_{i=1}^{\infty}\alpha_{i}(A\partial^{N})^{-i},\ \alpha_{0}=I\ and\ \phi^{*}=\partial \phi^{-1}\partial^{-1}$
which is different from the mcGD hierarchy.\\
Subsequently, the Sato equations $\partial_{2n+1,\alpha}\phi=-(B_{2n+1,\alpha})_{-}\phi\ (2n+1\neq 0\ mod\ N)$ and $\sum\limits_{\alpha=1}^{n}\partial_{(2j^{'}+1)N,\alpha}\phi=0\ (j^{'}=0,1,2,\cdots)$ of the mcBGD hierarchy can also be obtained.\\
Then, the wave function $W(t,z)$ and the adjoint wave function $W^{*}(t,z)$ of the mcBGD hierarchy are given
\begin{equation*}
W(t,z)=\phi e^{\xi(t,z)}=\omega(t,z)e^{\xi(t,z)},\ \ \ \ \ \ \ \ \ \ \ \ \ \ \ \ \ \ \ \
\end{equation*}
\begin{equation}
W^{*}(t,z)=(\phi^{*})^{-1}e^{-\xi(t,z)}=(\partial\phi\partial^{-1})e^{-\xi(t,z)},\ \ \ \ \ \
\end{equation}
where
\begin{equation*}
\ \ \ \ \xi(t,z)=\sum\limits_{i=1}^{\infty}\sum\limits_{\alpha=1}^{n}t_{2i-1,\alpha}E_{\alpha}z^{2i-1}. \ \ \ \ \ \ \ \ \ \ \ \ \ \ \ \ \ \ \ \ \ \ \ \ \
\end{equation*}
It should be noted that the Lax equations of the mcBGD hierarchy can also be deduced from
\begin{equation*}
\begin{split}
L^{\frac{2n+1}{N}}W=z^{2n+1}W,\ \ \ \ \ \partial_{2n+1,\alpha}W=B_{2n+1,\alpha}W,\ \ \ 2n+1\neq 0\ mod\ N,\ \ \ \ \ \ \ \ \ \ \ \\
\sum\limits_{\alpha=1}^{n}\partial_{(2j^{'}+1)N,\alpha}W=L^{2j^{'}+1}W,\ \ j^{'}=0,1,2,\cdots. \ \ \ \ \ \ \ \ \ \ \ \ \ \ \ \ \ \ \ \ \ \ \ \ \ \ \ \ \ \ \ \ \ \ \ \ \ \ \ \ \
\end{split}
\end{equation*}

\subsection{Additional symmetry of the mcBGD hierarchies}
\ \ \ \ Firstly, we give the  Orlov-Shulman's operator $M=\phi\Gamma\phi^{-1}$ of the mcBGD hierarchy, among them, $\Gamma=\sum\limits_{i=1}^{\infty}\sum\limits_{\alpha=1}^{n}(2i-1)t_{2i-1,\alpha}A^{\frac{2i-2}{N}}\partial^{2i-2}E_{\alpha},\ \ 2i-1\neq 0\ mod\ N.$\\
Combining the Orlov-Shulman's operator $M$, we can easily get $[M,L^{\frac{2n+1}{N}}]=-(2n+1)L^{\frac{2n}{N}}$.
\begin{definition}
The solution of the differential equation
\begin{equation}
\frac{\partial\phi}{\partial_{l,m,\alpha}^{*}}=-(D_{l,m,\alpha})_{-}\phi,
\end{equation}
where
\begin{equation*}
\ \ \partial_{l,m,\alpha}^{*}=\frac{\partial}{\partial t_{l,m,\alpha}^{*}},\ \ D_{l,m,\alpha}=M^{m}L^{\frac{l}{N}}R_{\alpha}-(-1)^{l}R_{\alpha}L^{\frac{l-1}{N}}M^{m}L^{\frac{1}{N}},
\end{equation*}
is called the additional symmetry of the B type multi-component Gelfand-Dickey hierarchy.
\end{definition}
Similarly, for the differential operator $D_{l,m,\alpha}$, assuming its negative part disappears and let the operator $(D_{l,m,\alpha})_{+}$ act on $W$, we can get an equation related to $z$
\begin{equation}
(D_{l,m,\alpha})_{+}W=(z^{l}-(-1)^{l}z^{2l-1})E_{\alpha}\partial^{m}_{z}W-(-1)^{l}m(l-1)z^{2(l-1)}E_{\alpha}\partial^{m-1}_{z}W.
\end{equation}
Note:This system can also be rewritten into a linear equation for the isomonodromy problem.\\
According to the definition of the additional symmetry of the mcBGD hierarchy, some formulas can be obtained by simple calculation
\begin{equation}
\partial_{l,m,\alpha}^{*}L=-[(D_{l,m,\alpha})_{-},L], \ \ \ \ \ \ \
\partial_{l,m,\alpha}^{*}R_{\beta}=-[(D_{l,m,\alpha})_{-},R_{\beta}].
\end{equation}
For the additional flows of the mcBGD hierarchy, they have a fraction of the flows that can survive.
\begin{theorem}
In the additional flows of the B type multi-component Gelfand-Dickey hierarchy, the flows which satisfy the condition $(M^{m-1}L^{\frac{N+l-1}{N}})_{-}=0$ and are shaped like $\sum\limits_{\alpha=1}^{n}\partial_{l,m,\alpha}^{*}$ or the condition $l=2i\ (i\in\mathbb{Z})$ and are shaped like $\sum\limits_{\alpha=1}^{n}\partial_{l,m,\alpha}^{*}$ can survive.
\end{theorem}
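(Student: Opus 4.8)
The plan is to run the argument of Theorem~\ref{theorem-1} with the two-term generator $D_{l,m,\alpha}$ in place of $M^{m}L^{\frac{l}{N}}R_{\alpha}$. First I would start from $\partial_{l,m,\alpha}^{*}L=-[(D_{l,m,\alpha})_{-},L]$ and pass to the negative part; since $L$ is purely differential, $[(D_{l,m,\alpha})_{+},L]$ is differential and drops out, so $(\partial_{l,m,\alpha}^{*}L)_{-}=-[D_{l,m,\alpha},L]_{-}$. Summing over $\alpha$ and using $\sum_{\alpha}R_{\alpha}=I$ collapses the generator to
\begin{equation*}
\sum_{\alpha=1}^{n}D_{l,m,\alpha}=M^{m}L^{\frac{l}{N}}-(-1)^{l}L^{\frac{l-1}{N}}M^{m}L^{\frac{1}{N}}.
\end{equation*}
Since $[M^{m}L^{\frac{k}{N}},L]=[M^{m},L]L^{\frac{k}{N}}$ and $[M,L]=-NL^{\frac{N-1}{N}}$, writing $C=[M^{m},L]$ gives
\begin{equation*}
\sum_{\alpha=1}^{n}(\partial_{l,m,\alpha}^{*}L)_{-}=-\bigl(CL^{\frac{l}{N}}\bigr)_{-}+(-1)^{l}\bigl(L^{\frac{l-1}{N}}CL^{\frac{1}{N}}\bigr)_{-}.
\end{equation*}

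For the first sandwiching the reduction is identical to Theorem~\ref{theorem-1} and yields $-\bigl(CL^{\frac{l}{N}}\bigr)_{-}=mN\bigl(M^{m-1}L^{\frac{N+l-1}{N}}\bigr)_{-}$. The whole problem thus reduces to comparing the second sandwiching with the first, and since
\begin{equation*}
CL^{\frac{l}{N}}-L^{\frac{l-1}{N}}CL^{\frac{1}{N}}=\bigl[C,L^{\frac{l-1}{N}}\bigr]L^{\frac{1}{N}},
\end{equation*}
the two negative parts coincide once the key identity $\bigl(\bigl[C,L^{\frac{l-1}{N}}\bigr]L^{\frac{1}{N}}\bigr)_{-}=0$ holds. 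Granting it, the summed flow becomes
\begin{equation*}
\sum_{\alpha=1}^{n}(\partial_{l,m,\alpha}^{*}L)_{-}=mN\bigl(M^{m-1}L^{\frac{N+l-1}{N}}\bigr)_{-}\bigl(1-(-1)^{l}\bigr).
\end{equation*}

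The factor $1-(-1)^{l}$ then delivers both clauses at once. When $l=2i$ is even it vanishes identically, so the flow survives with no further hypothesis; when $l$ is odd the factor equals $2$, and survival is equivalent to $\bigl(M^{m-1}L^{\frac{N+l-1}{N}}\bigr)_{-}=0$. This is precisely the dichotomy asserted by the theorem.

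The hard part will be the key identity $\bigl(\bigl[C,L^{\frac{l-1}{N}}\bigr]L^{\frac{1}{N}}\bigr)_{-}=0$. For $m=1$ it is immediate, since then $C=-NL^{\frac{N-1}{N}}$ commutes with every power of $L$ and the bracket vanishes outright; this already settles the Virasoro flows that are the paper's main concern. For $m\ge 2$ the bracket is a genuine pseudodifferential operator---at $m=2$, for instance, $\bigl[C,L^{\frac{l-1}{N}}\bigr]L^{\frac{1}{N}}=2N(l-1)L^{\frac{N+l-2}{N}}$---so the vanishing of its negative part is not automatic, and the decisive input should be the self-adjointness of the reduction: the generator is built so that $L^{*}=-\partial L\partial^{-1}$ forces $(D_{l,m,\alpha})_{-}^{*}=-\partial(D_{l,m,\alpha})_{-}\partial^{-1}$, whence $[(\sum_{\alpha}D_{l,m,\alpha})_{-},L]$ is itself B type self-adjoint; combined with the Jacobi normal form $[C,L^{\frac{l-1}{N}}]=[[M^{m},L^{\frac{l-1}{N}}],L]$ (valid because $[L,L^{\frac{l-1}{N}}]=0$), this rigidity is what should annihilate the remaining negative part. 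Carrying out this cancellation in full is the crux.
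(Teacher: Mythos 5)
Your reduction follows essentially the same route as the paper's proof: pass to negative parts (so that $(\partial_{l,m,\alpha}^{*}L)_{-}=-[D_{l,m,\alpha},L]_{-}$), use $[M^{m},L]=-mNM^{m-1}L^{\frac{N-1}{N}}$ on the first summand of the generator, and try to match the second summand against the first so that the scalar factor $1-(-1)^{l}$ appears. The structural part is sound, and for $m=1$ your argument is genuinely complete, since then $C=[M,L]=-NL^{\frac{N-1}{N}}$ commutes with every fractional power of $L$ and the discrepancy term $\bigl[C,L^{\frac{l-1}{N}}\bigr]L^{\frac{1}{N}}$ vanishes identically; this covers the Virasoro flows used later in the paper.

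For $m\ge 2$, however, there is a genuine gap, and your own computation exhibits it: the ``key identity'' $\bigl(\bigl[C,L^{\frac{l-1}{N}}\bigr]L^{\frac{1}{N}}\bigr)_{-}=0$ is not merely ``not automatic,'' it is false as stated, since at $m=2$ you find $\bigl[C,L^{\frac{l-1}{N}}\bigr]L^{\frac{1}{N}}=2N(l-1)L^{\frac{N+l-2}{N}}$ (plus further lower-order corrections from normal-ordering $[M^{2},L]$), and $(L^{\frac{N+l-2}{N}})_{-}$ is nonzero unless $l\equiv 2\bmod N$. Note moreover that this discrepancy term enters as $-(-1)^{l}\bigl(\bigl[C,L^{\frac{l-1}{N}}\bigr]L^{\frac{1}{N}}\bigr)_{-}$ and does \emph{not} carry the factor $1-(-1)^{l}$, so both clauses of the dichotomy --- including the supposedly unconditional $l=2i$ case --- depend on it. The closing appeal to self-adjointness of the B-type reduction is a heuristic rather than an argument: knowing that $[(\textstyle\sum_{\alpha}D_{l,m,\alpha})_{-},L]$ respects the constraint $L^{*}=-\partial L\partial^{-1}$ restricts its form but does not force the residual negative part to vanish. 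To be fair, the paper hides exactly this step behind ``after some deductions,'' so you have at least located precisely where the unresolved work lies; but as written your proposal establishes the theorem only for $m=1$.
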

\begin{proof}
For Lax operator $L$ of the mcBGD hierarchy, its negative part is equal to zero.\\
And we know $\partial_{l,m,\alpha}^{*}L=-[(D_{l,m,\alpha})_{-},L]$, so let's consider
\begin{equation*}
\begin{split}
\ \ \ \ \partial_{l,m,\alpha}^{*}L_{-}=&-[(D_{l,m,\alpha})_{-},L]_{-}=-(\phi[\Gamma^{m}A^{\frac{l}{N}}\partial^{l}E_{\alpha},A\partial^{N}]\phi^{-1})_{-}\\
&+(-1)^{l}(\phi[E_{\alpha}A^{\frac{l-1}{N}}\partial^{l-1}\Gamma^{m}A^{\frac{1}{N}}\partial,A\partial^{N}]\phi^{-1})_{-},
\end{split}
\end{equation*}
after some deductions, the upper formula can be reduced to
\begin{equation*}
\partial_{l,m,\alpha}^{*}L_{-}=(1-(-1)^{l})mN(M^{m-1}L^{\frac{N+l-1}{N}}R_{\alpha})_{-},\ \ \ \ \ \ \ \
\end{equation*}
only when $(M^{m-1}L^{\frac{N+l-1}{N}})_{-}=0$ or $l=2i\ (i\in\mathbb{Z})$, we can obtain
\begin{equation*}
\sum\limits_{\alpha=1}^{n}\partial_{l,m,\alpha}^{*}L_{-}=(1-(-1)^{l})mN(M^{m-1}L^{\frac{N+l-1}{N}})_{-}=0.\ \ \ \ \ \ \ \ \ \ \ \
\end{equation*}
\end{proof}
\begin{proposition}
The additional symmetric flows $\sum\limits_{\alpha=1}^{n}\partial_{l,m,\alpha}^{*}$ which satisfy the condition $(M^{m-1}L^{\frac{N+l-1}{N}})_{-}=0$ or the condition $l=2i\ (i\in\mathbb{Z})$ commute with the flows $\partial_{2k+1,\beta}\ (2k+1\neq0\ mod\ N)$ of the B type multi-component Gelfand-Dickey hierarchy.
\end{proposition}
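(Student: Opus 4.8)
The plan is to follow the template of the proof of Proposition~\ref{2.7} almost verbatim, replacing the mcGD generator $M^{m}L^{\frac{l}{N}}R_{\alpha}$ by the mcBGD generator $D_{l,m,\alpha}$ and the general flow $\partial_{n,\beta}$ by the odd flow $\partial_{2k+1,\beta}$. First I would reduce the statement to a single index $\alpha$ by linearity of the bracket,
\begin{equation*}
\big[\sum\limits_{\alpha=1}^{n}\partial_{l,m,\alpha}^{*},\partial_{2k+1,\beta}\big]\phi=\sum\limits_{\alpha=1}^{n}\big(\big[\partial_{l,m,\alpha}^{*},\partial_{2k+1,\beta}\big]\phi\big),
\end{equation*}
so that it suffices to prove $[\partial_{l,m,\alpha}^{*},\partial_{2k+1,\beta}]\phi=0$ for each $\alpha$. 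Here the surviving conditions $(M^{m-1}L^{\frac{N+l-1}{N}})_{-}=0$ or $l=2i$ only single out which summed flows are genuine symmetries; the commutation identity itself is purely formal and holds term by term.

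Next I would expand the single-index commutator with the product rule, feeding in the two Sato equations $\partial_{2k+1,\beta}\phi=-(L^{\frac{2k+1}{N}}R_{\beta})_{-}\phi$ and $\partial_{l,m,\alpha}^{*}\phi=-(D_{l,m,\alpha})_{-}\phi$, which gives
\begin{equation*}
[\partial_{l,m,\alpha}^{*},\partial_{2k+1,\beta}]\phi=-\partial_{l,m,\alpha}^{*}\big((L^{\frac{2k+1}{N}}R_{\beta})_{-}\phi\big)+\partial_{2k+1,\beta}\big((D_{l,m,\alpha})_{-}\phi\big).
\end{equation*}
Distributing the derivations over the products produces four terms. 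Two of them I would control with the mcBGD analogue of the corollary identity,
\begin{equation*}
\partial_{l,m,\alpha}^{*}(L^{\frac{2k+1}{N}}R_{\beta})_{-}=-[(D_{l,m,\alpha})_{-},(L^{\frac{2k+1}{N}}R_{\beta})_{-}+\partial_{2k+1,\beta}]_{-},
\end{equation*}
which I would derive exactly as in the mcGD case from the already-established formulas $\partial_{l,m,\alpha}^{*}L=-[(D_{l,m,\alpha})_{-},L]$ and $\partial_{l,m,\alpha}^{*}R_{\beta}=-[(D_{l,m,\alpha})_{-},R_{\beta}]$. The other two terms require the action of the ordinary odd flow on the generator: since $\partial_{2k+1,\beta}L=[B_{2k+1,\beta},L]$, $\partial_{2k+1,\beta}R_{\alpha}=[B_{2k+1,\beta},R_{\alpha}]$, and $\partial_{2k+1,\beta}M=[B_{2k+1,\beta},M]$ (the last following from $M=\phi\Gamma\phi^{-1}$ together with the Sato equation, as in the mcGD derivation), the Leibniz rule gives $\partial_{2k+1,\beta}D_{l,m,\alpha}=[B_{2k+1,\beta},D_{l,m,\alpha}]$, and passing to negative parts yields $\partial_{2k+1,\beta}(D_{l,m,\alpha})_{-}=[\partial_{2k+1,\beta},(D_{l,m,\alpha})_{-}]_{-}$ in the paper's convention.

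After substituting these identities I expect the four terms to collapse, exactly as in Proposition~\ref{2.7}: the two pieces carrying $\partial_{2k+1,\beta}$ cancel, and the remaining terms combine into
\begin{equation*}
[(L^{\frac{2k+1}{N}}R_{\beta})_{-},(D_{l,m,\alpha})_{-}]\phi-[(L^{\frac{2k+1}{N}}R_{\beta})_{-},(D_{l,m,\alpha})_{-}]_{-}\phi=0,
\end{equation*}
the vanishing being the grading fact that the commutator of two purely negative-order operators is again purely negative-order, so its nonnegative part is zero. Summing over $\alpha$ then gives $[\sum_{\alpha}\partial_{l,m,\alpha}^{*},\partial_{2k+1,\beta}]\phi=0$, and since the two families of flows commute on $\phi$ they commute on the whole differential algebra generated by the coefficients of $\phi$, which proves the proposition. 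The step I expect to be the main obstacle is verifying the Leibniz action and the corollary-type identity for the \emph{two-piece} generator $D_{l,m,\alpha}=M^{m}L^{\frac{l}{N}}R_{\alpha}-(-1)^{l}R_{\alpha}L^{\frac{l-1}{N}}M^{m}L^{\frac{1}{N}}$: one must check that the extra B-type summand transforms under the same adjoint action and that taking negative parts preserves the cancellation, rather than the final grading argument, which is routine.
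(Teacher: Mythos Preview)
Your proposal is correct and follows essentially the same route as the paper: reduce to a single $\alpha$ by linearity, expand the commutator via the two Sato equations, feed in the corollary-type identity for $\partial_{l,m,\alpha}^{*}(L^{\frac{2k+1}{N}}R_{\beta})_{-}$ and the adjoint action $\partial_{2k+1,\beta}(D_{l,m,\alpha})_{-}=[\partial_{2k+1,\beta},(D_{l,m,\alpha})_{-}]_{-}$, and collapse to the grading identity $[X_{-},Y_{-}]-[X_{-},Y_{-}]_{-}=0$. The paper's version is actually terser than yours---it writes the four-term computation directly with $D_{l,m,\alpha}$ in place of $M^{m}L^{\frac{l}{N}}R_{\alpha}$ and does not pause to justify the Leibniz action on the two-piece generator---so your identification of that verification as the ``main obstacle'' is a point you are handling more carefully than the paper does.
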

\begin{proof}
\begin{equation*}
\ \ \ \ \ \ \ \ \big[\sum\limits_{\alpha=1}^{n}\partial_{l,m,\alpha}^{*},\partial_{2k+1,\beta}\big]\phi=\Big(\sum\limits_{\alpha=1}^{n}\big[\partial_{l,m,\alpha}^{*},\partial_{2k+1,\beta}\big]\Big)\phi=
\sum\limits_{\alpha=1}^{n}\Big(\big[\partial_{l,m,\alpha}^{*},\partial_{2k+1,\beta}\big]\phi\Big),
\end{equation*}
\begin{equation*}
\begin{split}
&[\partial_{l,m,\alpha}^{*},\partial_{2k+1,\beta}]\phi\\
=&-\partial_{l,m,\alpha}^{*}((B_{2k+1,\beta})_{-}\phi)+\partial_{2k+1,\beta}((D_{l,m,\alpha})_{-}\phi)\\
=&[(D_{l,m,\alpha})_{-},(B_{2k+1,\beta})_{-}+\partial_{2k+1,\beta}]_{-}\phi+(B_{2k+1,\beta})_{-}(D_{l,m,\alpha})_{-}\phi\\ \ \ \ \ \ \ \
&+[\partial_{2k+1,\beta},(D_{l,m,\alpha})_{-}]_{-}\phi-(D_{l,m,\alpha})_{-}(B_{2k+1,\beta})_{-}\phi\\
=&[(D_{l,m,\alpha})_{-},(B_{2k+1,\beta})_{-}]\phi-[(D_{l,m,\alpha})_{-},(B_{2k+1,\beta})_{-}]_{-}\phi\\
=&0,
\end{split}
\end{equation*}
then $\big[\sum\limits_{\alpha=1}^{n}\partial_{l,m,\alpha}^{*},\partial_{2k+1,\beta}\big]\phi=0,$ so the proposition is proved.
\end{proof}
Then we consider a special additional flow $\big(l=-(2k-1),\ l-1=0\ mod\ N\big)$
\begin{equation*}
\sum\limits_{\alpha=1}^{n}\partial_{-(2k-1),1,\alpha}^{*}L^{\frac{2k}{N}}=-\sum\limits_{\alpha=1}^{n}\big[(D_{-(2k-1),1,\alpha})_{-},L^{\frac{2k}{N}}\big],\ 2k=0\ mod\ N,
\end{equation*}
after a series of calculations, we can get
\begin{equation*}
\sum\limits_{\alpha=1}^{n}\partial_{-(2k-1),1,\alpha}^{*}L^{\frac{2k}{N}}=\sum\limits_{\alpha=1}^{n}\big[(D_{-(2k-1),1,\alpha})_{+},L^{\frac{2k}{N}}\big]+4kI,\ \
\end{equation*}
When $2k=0\ mod\ N$, $L^{\frac{2k}{N}}$ is a differential operator, then
\begin{equation*}
\sum\limits_{\alpha=1}^{n}\big[L^{\frac{2k}{N}},\frac{1}{4k}(D_{-(2k-1),1,\alpha})_{+}\big]=I,\ \ \ \ \ \ \ \ \ \ \ \ \ \ \ \ \ \ \ \ \ \ \ \ \ \ \ \ \
\end{equation*}
thus $\Big[L^{\frac{2k}{N}},\frac{1}{4k}\big(\sum\limits_{\alpha=1}^{n}(D_{-(2k-1),1,\alpha})_{+}\big)\Big]=I\ $ is the string equation of the mcBGD hierarchy.

\subsection{$\tau$ function and Virasoro constraint of the mcBGD hierarchies}
\ \ \ \ The existence theorem of a $\tau$ function of the mcBGD hierarchy is given below.
\begin{theorem}\cite{20}
Suppose the $\tau$ function is a matrix T=($\tau_{\alpha,\beta}$) and $\tau_{\alpha,\alpha}$ is abbreviated as $\tau$, then there exists a function $\tau(\cdots,t_{2s-1,\gamma},\cdots)$, which makes
\begin{equation}\label{1}
\begin{cases}
\omega_{\alpha,\alpha}(t,z)=\frac{\tau(\cdots,t_{2s-1,\gamma}-\delta_{\gamma,\beta}\frac{1}{(2s-1)z^{2s-1}},\cdots)}{\tau(\cdots,t_{2s-1,\gamma},\cdots)},\\[5pt]
\omega_{\alpha,\beta}(t,z)=\frac{\tau_{\alpha,\beta}(\cdots,t_{2s-1,\gamma}-\delta_{\gamma,\beta}\frac{1}{(2s-1)z^{2s-1}},\cdots)}{z\cdot\tau(\cdots,t_{2s-1,\gamma},\cdots)},\ \alpha\neq\beta,\\
\ \ \ \ \ \ \ \ \ \ \ \ \ \ \ \ \ \ \ \ \ \ \ \ \ \ \ \ \ \ \ \ \ \ \ \ \ \ 2s-1\neq 0\ mod\ N,
\end{cases}
\end{equation}
only when $\gamma=\beta$, the time variable $t_{2s-1,\gamma}$ will be moved.\\
 Except for the above case $2s-1\neq 0\ mod\ N$, the other cases satisfy $\sum\limits_{\alpha=1}^{n}\partial_{k,\alpha}\tau=0,\ where\ \ k=0\ mod\ N\ or\ k=0\ mod\ 2.$
\end{theorem}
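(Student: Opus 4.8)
The plan is to adapt the Sato--Wilson construction of \cite{20} to the matrix (multi-component) setting and then specialise to the B-type reduction. The whole hierarchy is encoded in a single bilinear residue identity satisfied by the wave and adjoint wave functions, and the tau function is produced from that identity by exhibiting a suitable scalar potential. I would therefore first derive the bilinear identity: starting from the Sato equations $\partial_{2n+1,\alpha}\phi=-(B_{2n+1,\alpha})_-\phi$, the dressing form of $W$, and the adjoint relation $W^{*}(t,z)=(\partial\phi\partial^{-1})e^{-\xi(t,z)}$, one shows that
\begin{equation*}
\res_z\, W(t,z)\,\big(W^{*}(t',z)\big)^{T}=0
\end{equation*}
holds for all time sequences $t,t'$. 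This matrix identity is equivalent to the Sato equations of the mcBGD hierarchy and is the analogue of the Hirota bilinear equation.

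Next I would construct the tau function. Writing $W=\omega(t,z)e^{\xi(t,z)}$ and expanding the bilinear identity in powers of $z$, the coefficient relations show that $\omega$ is determined by a single scalar potential through logarithmic derivatives. Concretely, one checks that the one-form whose component along each odd time $t_{2s-1,\gamma}$ is the residue of a suitable expression built from $\omega$ and $\partial_{2s-1,\gamma}\omega$ is closed; integrating it defines $\log\tau$ up to an additive constant. The diagonal entry then becomes the ratio of $\tau$ at the Miwa-shifted and unshifted arguments, the shift being $t_{2s-1,\gamma}\mapsto t_{2s-1,\gamma}-\delta_{\gamma,\beta}\frac{1}{(2s-1)z^{2s-1}}$ (only the $\beta$-component times move, since the relevant column of the dressing matrix is singled out by $E_\beta$). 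The off-diagonal entries $\omega_{\alpha,\beta}$ with $\alpha\neq\beta$, whose leading behaviour in $z$ is one order lower than the diagonal entries, pick up the extra factor $z^{-1}$ and are governed by the matrix entries $\tau_{\alpha,\beta}$, giving the stated formulas.

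Finally I would impose the B-type reduction and read off the vanishing conditions. The reduction $\phi^{*}=\partial\phi^{-1}\partial^{-1}$ together with the already-established fact that the mcBGD hierarchy carries only odd flows forces the Miwa shift to involve solely the odd times $t_{2s-1,\gamma}$ with $2s-1\neq 0\bmod N$. The relation $\sum_{\alpha}\partial_{k,\alpha}\tau=0$ for $k\equiv 0\bmod 2$ then reflects the absence of even flows, while the case $k\equiv 0\bmod N$ descends from the constraint $\sum_{\alpha}\partial_{jN,\alpha}\phi=0$ built into the Sato equations; both pass to $\log\tau$ once the potential has been constructed.

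The main obstacle is the closedness of the one-form defining $\log\tau$, i.e. the symmetry $\partial_{2r-1,\delta}(\cdots)=\partial_{2s-1,\gamma}(\cdots)$ of the mixed logarithmic derivatives. Establishing it requires manipulating the residues produced by the Sato equations and showing that their antisymmetric part vanishes by virtue of the bilinear identity (equivalently, by commutativity of the flows). In the matrix case this has to be verified entrywise, and the off-diagonal normalisation carrying the $z^{-1}$ prefactor must be tracked carefully so that the $\delta_{\gamma,\beta}$ dependence of the shift comes out correctly; once this symmetry is secured the remaining steps parallel the scalar argument of \cite{20}.
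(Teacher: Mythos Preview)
The paper does not supply a proof of this theorem at all: it is stated with the citation \cite{20} and used as a black box (``the existence theorem of the $\tau$ function''), after which the authors immediately proceed to compute the action of the additional flows on $\tau$. So there is no ``paper's own proof'' to compare against; any argument you write will be more than what is present in the text.

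Your outline follows the standard Sato--Wilson route (bilinear residue identity $\Rightarrow$ closed one-form $\Rightarrow$ $\log\tau$ $\Rightarrow$ Miwa-shift formula), which is indeed the method of the cited reference adapted to the multi-component and B-type setting, and the identification of the key technical point (entrywise closedness of the one-form, with the off-diagonal $z^{-1}$ normalisation) is accurate. One caution: the B-type constraint $\phi^{*}=\partial\phi^{-1}\partial^{-1}$ modifies the adjoint wave function to $W^{*}=(\partial\phi\partial^{-1})e^{-\xi}$, so the bilinear identity you write down should carry that extra $\partial(\cdot)\partial^{-1}$ conjugation rather than the plain KP form; this is what ultimately forces the restriction to odd times in the Miwa shift, and it is worth making that dependence explicit when you pass from the identity to the potential.
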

Next, starting from the additional symmetry of the wave operator $\omega$ and combining with the existence theorem of the $\tau$ function, we study the additional symmetry of the $\tau$ function.
\begin{proposition}\label{e}
$D_{-(2k-1),1,\alpha}\ (2k=0\ mod\ N,\ k>0)$ can be reduced to the following form
\begin{equation}
\begin{split}
&(D_{-(2k-1),1,\alpha})_{-}\\
=&2\phi(\sum\limits_{i=1}^{k}(2i-1)t_{2i-1,\alpha}A^{\frac{2i-2k-1}{N}}\partial^{2i-2k-1}E_{\alpha})\phi^{-1}\\
&+2\sum\limits_{i=k+1}^{\infty}(2i-1)t_{2i-1,\alpha}(L^{\frac{2i-2k-1}{N}}R_{\alpha})_{-}\ \ \ \ \ \ \ \ \ \ \ \ \ \ \ \ \ \ \ \ \\
&-2kL^{-\frac{2k}{N}}R_{\alpha},\ \ 2i-1\neq 0\ mod\ N.
\end{split}
\end{equation}
\end{proposition}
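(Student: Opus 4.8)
The plan is to substitute $l=-(2k-1)$ and $m=1$ directly into $D_{l,m,\alpha}=M^{m}L^{\frac{l}{N}}R_{\alpha}-(-1)^{l}R_{\alpha}L^{\frac{l-1}{N}}M^{m}L^{\frac{1}{N}}$, push everything into undressed form through the dressing relations $L=\phi A\partial^{N}\phi^{-1}$, $R_{\alpha}=\phi E_{\alpha}\phi^{-1}$, $M=\phi\Gamma\phi^{-1}$, simplify the resulting symbols, and only at the very end take the negative part and split the sum. Since $l=-(2k-1)$ is odd, $(-1)^{l}=-1$ and $\frac{l-1}{N}=-\frac{2k}{N}$, so the operator reads $D_{-(2k-1),1,\alpha}=ML^{-\frac{2k-1}{N}}R_{\alpha}+R_{\alpha}L^{-\frac{2k}{N}}ML^{\frac{1}{N}}$. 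Conjugating by $\phi^{-1}$ turns the two summands into $\phi\,\Gamma A^{-\frac{2k-1}{N}}\partial^{-(2k-1)}E_{\alpha}\,\phi^{-1}$ and $\phi\,E_{\alpha}A^{-\frac{2k}{N}}\partial^{-2k}\Gamma A^{\frac{1}{N}}\partial\,\phi^{-1}$, so the whole computation reduces to an identity among constant-coefficient (but $t$-dependent) symbols.

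For the first summand I would use that $A$, $E_{\alpha}$, $E_{\alpha'}$ are diagonal (hence mutually commuting and commuting with every power of $\partial$), together with $E_{\alpha'}E_{\alpha}=\delta_{\alpha'\alpha}E_{\alpha}$, to collapse the $\alpha'$-sum in $\Gamma$ and obtain $\sum_{i}(2i-1)t_{2i-1,\alpha}A^{\frac{2i-2k-1}{N}}\partial^{2i-2k-1}E_{\alpha}$, with the restriction $2i-1\neq0\ mod\ N$ inherited from $\Gamma$. Here no $\partial$ has to be moved across $\Gamma$, so there is no correction term.

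The decisive step is the second summand, and this is where I expect the one genuine subtlety to sit: one must not treat $\partial^{-2k}$ as commuting with $\Gamma$. Using the relation $[\partial,\Gamma]=A^{-\frac{1}{N}}$ (the analog of the $\Gamma$-property recorded for the mcGD hierarchy, equivalently the dressed form of $[M,L^{\frac{1}{N}}]=-I$, which follows from $\partial=\sum_{\alpha}a_{\alpha}^{-\frac{1}{N}}\partial_{1,\alpha}$ and $\partial_{1,\alpha}\Gamma=E_{\alpha}$), and noting that $A^{-\frac{1}{N}}$ commutes with $\partial$, one gets $[\partial^{m},\Gamma]=mA^{-\frac{1}{N}}\partial^{m-1}$ for every integer $m$; in particular $\partial^{-2k}\Gamma=\Gamma\partial^{-2k}-2kA^{-\frac{1}{N}}\partial^{-2k-1}$. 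Feeding this into the second summand reproduces exactly the same sum $\sum_{i}(2i-1)t_{2i-1,\alpha}A^{\frac{2i-2k-1}{N}}\partial^{2i-2k-1}E_{\alpha}$ as the first summand, plus the extra commutator contribution $-2kA^{-\frac{2k}{N}}\partial^{-2k}E_{\alpha}$. Re-dressing then yields the clean intermediate identity $D_{-(2k-1),1,\alpha}=2\sum_{i}(2i-1)t_{2i-1,\alpha}L^{\frac{2i-2k-1}{N}}R_{\alpha}-2kL^{-\frac{2k}{N}}R_{\alpha}$; omitting the last term is precisely the error one makes by ignoring $[\partial^{-2k},\Gamma]$, and this commutator is the sole origin of the $-2kL^{-\frac{2k}{N}}R_{\alpha}$ appearing in the statement.

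Finally I would take the negative part and split the $i$-sum at $i=k$. For $1\le i\le k$ the exponent $2i-2k-1\le-1$ makes $L^{\frac{2i-2k-1}{N}}R_{\alpha}$ a purely negative-order operator, so $(L^{\frac{2i-2k-1}{N}}R_{\alpha})_{-}=L^{\frac{2i-2k-1}{N}}R_{\alpha}=\phi A^{\frac{2i-2k-1}{N}}\partial^{2i-2k-1}E_{\alpha}\phi^{-1}$, which I collect into the dressed block $2\phi\big(\sum_{i=1}^{k}(2i-1)t_{2i-1,\alpha}A^{\frac{2i-2k-1}{N}}\partial^{2i-2k-1}E_{\alpha}\big)\phi^{-1}$; for $i\ge k+1$ the operator has a nontrivial differential part and survives only as $2\sum_{i=k+1}^{\infty}(2i-1)t_{2i-1,\alpha}(L^{\frac{2i-2k-1}{N}}R_{\alpha})_{-}$; and the remaining $-2kL^{-\frac{2k}{N}}R_{\alpha}$ is already of negative order, so it passes through $(\cdot)_{-}$ untouched. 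This reproduces the three displayed pieces exactly. The main obstacle is thus conceptual rather than computational: recognizing that the $t$-dependence of $\Gamma$ forces the nonzero commutator $[\partial^{-2k},\Gamma]$ and tracking the order of each summand carefully enough to justify the clean split of the negative part.
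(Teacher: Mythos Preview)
Your proposal is correct and follows essentially the same route as the paper: substitute $l=-(2k-1)$, $m=1$, undress via $\phi$, use the commutator $[\partial^{-2k},\Gamma]=-2kA^{-\frac{1}{N}}\partial^{-2k-1}$ to extract the $-2kL^{-\frac{2k}{N}}R_{\alpha}$ term, then split the sum at $i=k$ when taking the negative part. The paper's proof is terser and jumps directly from $(ML^{-\frac{2k-1}{N}}R_{\alpha})_{-}+(R_{\alpha}L^{-\frac{2k}{N}}ML^{\frac{1}{N}})_{-}$ to $2(\phi\Gamma A^{-\frac{2k-1}{N}}\partial^{-(2k-1)}E_{\alpha}\phi^{-1})_{-}-2k(\phi E_{\alpha} A^{-\frac{2k}{N}}\partial^{-2k}\phi^{-1})_{-}$ without spelling out the commutator step you correctly identified as the crux.
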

\begin{proof}
According to the additional symmetry of the mcBGD hierarchy, we will have
\begin{equation*}
\begin{split}
\ \ \ \ \ &(D_{-(2k-1),1,\alpha})_{-}\\
=&(ML^{-\frac{2k-1}{N}}R_{\alpha}-(-1)^{-(2k-1)}R_{\alpha}L^{\frac{l-1}{N}}M^{m}L^{\frac{1}{N}})_{-}\\
=&(ML^{-\frac{2k-1}{N}}R_{\alpha})_{-}+(R_{\alpha}L^{\frac{l-1}{N}}M^{m}L^{\frac{1}{N}})_{-}\\
=&2(\phi\Gamma A^{-\frac{2k-1}{N}}\partial^{-(2k-1)}E_{\alpha}\phi^{-1})_{-}-2k(\phi E_{\alpha} A^{-\frac{2k}{N}}\partial^{-2k}\phi^{-1})_{-}\\
=&2(\phi\sum\limits_{i=1}^{\infty}(2i-1)t_{2i-1,\alpha}A^{\frac{2i-2k-1}{N}}\partial^{2i-2k-1}E_{\alpha}\phi^{-1})_{-}-2kL^{-\frac{2k}{N}}R_{\alpha}\\
=&2\phi(\sum\limits_{i=1}^{k}(2i-1)t_{2i-1,\alpha}A^{\frac{2i-2k-1}{N}}\partial^{2i-2k-1}E_{\alpha})\phi^{-1}\\
&+2\sum\limits_{i=k+1}^{\infty}(2i-1)t_{2i-1,\alpha}(L^{-\frac{2i-2k-1}{N}}R_{\alpha})_{-}
-2kL^{-\frac{2k}{N}}R_{\alpha}.
\end{split}
\end{equation*}
\end{proof}
Based on the additional symmetry of the mcBGD hierarchy and the proposition \ref{e}, we can calculate
\begin{equation}\label{f}
\begin{split}
&\sum\limits_{\alpha=1}^{n}\partial_{-(2k-1),1,\alpha}^{*}\phi\\
=&-\sum\limits_{\alpha=1}^{n}(D_{-(2k-1),1,\alpha})_{-}\phi\\
=&\sum\limits_{\alpha=1}^{n}\big(-2\phi\sum\limits_{i=1}^{k}(2i-1)t_{2i-1,\alpha}A^{\frac{2i-2k-1}{N}}\partial^{2i-2k-1}E_{\alpha}\\
&+2\sum\limits_{i=k+1}^{\infty}(2i-1)t_{2i-1,\alpha}(\partial_{2(i-k)-1,\alpha}\phi)\big)
+2k\phi A^{-\frac{2k}{N}}\partial^{-2k}I,\\
&\ \ \ \ \ \ \ \ \ \ \ \ \ \ \ \ \ \ \ \ \ \ \ \ \ \ \ \ \ \ \ \ \ \ \ \ \ \ \ \ \ \ \ \ \  \ \ \ \ \ \ \ (2k=0\ mod\ N).
\end{split}
\end{equation}
Obviously, when both sides of equation \eqref{f} are simultaneously applied to a function $exp(A^{-\frac{1}{N}}xz)$, the equation is still valid. By further calculating with
\begin{equation*}
[\phi,t_{1,\alpha}]exp(A^{-\frac{1}{N}}xz)=(\partial_{z}\omega)exp(A^{-\frac{1}{N}}xz),\ \ \ \ \ \ \ \ \ \ \ \ \ \ \ \ \ \ \ \ \ \ \ \ \ \ \ \
\end{equation*}
\begin{equation*}
\phi\partial^{-l}exp(A^{-\frac{1}{N}}xz)=\phi z^{-l}exp(A^{-\frac{1}{N}}xz)=z^{-l}\omega(exp(A^{-\frac{1}{N}}xz)),\ \ \
\end{equation*}
we can get
\begin{equation*}
\begin{split}
&(\sum\limits_{\alpha=1}^{n}\partial_{-(2k-1),1,\alpha}^{*}\omega)exp(A^{-\frac{1}{N}}xz)\\
=&(-\sum\limits_{\alpha=1}^{n}(D_{-(2k-1),1,\alpha})_{-}\phi)exp(A^{-\frac{1}{N}}xz)\\
=&-2z^{-2k+1}(\partial_{z}\omega)exp(A^{-\frac{1}{N}}xz)+2kz^{-2k}\omega(exp(A^{-\frac{1}{N}}xz))\\
&-2(\sum\limits_{i=1}^{k}\sum\limits_{\alpha=1}^{n}(2i-1)t_{2i-1,\alpha}A^{\frac{2i-2k-1}{N}}\partial^{2i-2k-1}E_{\alpha}\omega)exp(A^{-\frac{1}{N}}xz)\\
&+2\sum\limits_{i=k+1}^{\infty}\sum\limits_{\alpha=1}^{n}(2i-1)t_{2i-1,\alpha}(\partial_{2(i-k)-1,\alpha}\omega)exp(A^{-\frac{1}{N}}xz),
\end{split}
\end{equation*}
therefore
\begin{equation}
\begin{split}
&\sum\limits_{\alpha=1}^{n}\partial_{-(2k-1),1,\alpha}^{*}\omega\\
=&-2z^{-2k+1}(\partial_{z}\omega)I+2kz^{-2k}\omega I\\
&-2(\sum\limits_{i=1}^{k}\sum\limits_{\alpha=1}^{n}(2i-1)t_{2i-1,\alpha}A^{\frac{2i-2k-1}{N}}\partial^{2i-2k-1}E_{\alpha}\omega)\ \ \ \ \ \ \ \ \ \ \
\ \ \ \ \\
&+2\sum\limits_{i=k+1}^{\infty}\sum\limits_{\alpha=1}^{n}(2i-1)t_{2i-1,\alpha}(\partial_{2(i-k)-1,\alpha}\omega),\\
\end{split}
\end{equation}
correspondingly, we can obtain
\begin{equation}
\begin{split}
&\sum\limits_{\alpha=1}^{n}\partial_{-(2k-1),1,\alpha}^{*}\omega_{\gamma,\beta}\\
=&\sum\limits_{\alpha=1}^{n}\big(-2z^{-2k+1}\delta_{\alpha,\beta}(\partial_{z}\omega_{\gamma,\beta})+2kz^{-2k}\delta_{\alpha,\beta}\omega_{\gamma,\beta}\\
&-2(\sum\limits_{i=1}^{k}(2i-1)t_{2i-1,\alpha}A^{\frac{2i-2k-1}{N}}\partial^{2i-2k-1}\delta_{\alpha,\beta}\omega_{\gamma,\beta})\ \ \ \ \ \ \ \ \ \ \
\ \\
&+2\sum\limits_{i=k+1}^{\infty}(2i-1)t_{2i-1,\alpha}(\partial_{2(i-k)-1,\alpha}\omega_{\gamma,\beta})\big).\\
\end{split}
\end{equation}
We can find that $\sum\limits_{\alpha=1}^{n}\partial_{-(2k-1),1,\alpha}^{*}\phi=0$ and $\sum\limits_{\alpha=1}^{n}\partial_{-(2k-1),1,\alpha}^{*}\omega=0\ $ are equivalent. Thus, we can get the constraints of the tau function by equation \eqref{1}.\\
For $\sum\limits_{\alpha=1}^{n}\partial_{-(2k-1),1,\alpha}^{*}\omega\ (2k=0\ mod\ N,\ k>0)$, it is difficult to find a unified simplified form, so in this paper we only discuss the case where $k$ is a positive integer.
\begin{proposition}
The action of additional symmetries $\sum\limits_{\alpha=1}^{n}\partial_{-(2k-1),1,\alpha}^{*}\ (2k=0\ mod\ N,\ k>0)$ on $(\tau_{\gamma,\beta})$ is given by
\begin{equation}
\begin{split}
&\sum\limits_{\alpha=1}^{n}\partial_{-(2k-1),1,\alpha}^{*}\tau_{\gamma,\beta}\\
=&\sum\limits_{\alpha=1}^{n}\big((\frac{1}{2}\sum\limits_{m=2k}^{\infty}(2m-1)t_{2m-1,\alpha}\partial_{2(m-k)-1,\alpha}\\
&+\frac{1}{8}\sum\limits_{m+s=2k}(2m-1)(2s-1)t_{2m-1,\alpha}t_{2s-1,\alpha})\big)\tau_{\gamma,\beta}+(\sum\limits_{\alpha=1}^{n}c_{\alpha,\beta})\tau_{\gamma,\beta}.
\end{split}
\end{equation}
where $t_{2m-1,\alpha},\ t_{2s-1,\alpha}$ are arguments of the element $\tau_{\gamma,\beta}$.
\end{proposition}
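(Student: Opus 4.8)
The plan is to convert the already-derived action of the additional flow on the wave matrix $\omega$ into an action on $\tau$ by means of the $\tau$-function representation \eqref{1}, matching coefficients of powers of $z$. Concretely, I would start from the expression for $\sum_{\alpha=1}^{n}\partial_{-(2k-1),1,\alpha}^{*}\omega_{\gamma,\beta}$ established just above the statement, which contains four pieces: the scaling term $-2z^{-2k+1}\partial_z\omega$, the term $2kz^{-2k}\omega$, the explicit polynomial terms $-2\sum_{i=1}^{k}(2i-1)t_{2i-1,\alpha}A^{(2i-2k-1)/N}\partial^{2i-2k-1}E_\alpha\omega$, and the genuine-flow terms $2\sum_{i=k+1}^{\infty}(2i-1)t_{2i-1,\alpha}\partial_{2(i-k)-1,\alpha}\omega$. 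The aim is to show that, after dividing by $\omega_{\gamma,\beta}$, this equals the logarithmic derivative produced by a single operator acting on $\tau_{\gamma,\beta}$.

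Next I would posit $\sum_\alpha\partial^*_{-(2k-1),1,\alpha}\tau_{\gamma,\beta}=V_k\tau_{\gamma,\beta}$ with $V_k=D_k+Q_k$, where $D_k$ is first order in the times and $Q_k$ is multiplication by a quadratic polynomial plus a constant. Since $D_k$ acts as a derivation, $D_k\log\tau_{\gamma,\beta}=(D_k\tau_{\gamma,\beta})/\tau_{\gamma,\beta}$, so $\partial^*\log\tau_{\gamma,\beta}=D_k\log\tau_{\gamma,\beta}+Q_k$. Inserting the representation \eqref{1}, for the diagonal entry the $z$-factor is absent and for the off-diagonal entry the factor $1/z$ is independent of the additional time and drops out of the logarithmic derivative; in both cases one obtains
\[
\frac{\sum_\alpha\partial^*_{-(2k-1),1,\alpha}\omega_{\gamma,\beta}}{\omega_{\gamma,\beta}}=\big[D_k\log\tau_{\gamma,\beta}+Q_k\big]_{t-[z^{-1}]_\beta}-\big[D_k\log\tau+Q_k\big]_t,
\]
where $[z^{-1}]_\beta$ denotes the Miwa shift $t_{2s-1,\beta}\mapsto t_{2s-1,\beta}-\tfrac{1}{(2s-1)z^{2s-1}}$ dictated by \eqref{1}. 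Expanding the shifted arguments as Taylor series in $1/z$ and matching the resulting Laurent expansion against the explicit left-hand side then determines $D_k$ and $Q_k$ uniquely up to the additive constant $\sum_\alpha c_{\alpha,\beta}$, which is the integration constant. This is precisely the computation carried out for the mcGD case in Proposition \ref{r} (following \cite{20}) with $l=-(2k-1)$, so the resulting operator is the $l=-(2k-1)$ specialization adapted to the odd-time B-type reduction, in which $-l+1=2k$ fixes the summation ranges $m\geq 2k$ and $m+s=2k$.

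In the matching, the genuine-flow terms $2\sum_{i=k+1}^{\infty}(2i-1)t_{2i-1,\alpha}\partial_{2(i-k)-1,\alpha}$ are what produce the first-order part $\tfrac12\sum_{m\geq 2k}(2m-1)t_{2m-1,\alpha}\partial_{2(m-k)-1,\alpha}$, while the scaling term $-2z^{-2k+1}\partial_z\omega$ together with the explicit polynomial terms supply the missing low-index first-order contributions and, through the second-order Taylor coefficient of the Miwa shift, the quadratic piece $\tfrac18\sum_{m+s=2k}(2m-1)(2s-1)t_{2m-1,\alpha}t_{2s-1,\alpha}$. Throughout one keeps the restriction $2i-1\neq 0\ \mathrm{mod}\ N$, and checks that the diagonal normalization $\tau=\tau_{\alpha,\alpha}$ is consistent with the off-diagonal entries so that a single operator governs all $\tau_{\gamma,\beta}$.

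I expect the main obstacle to be the bookkeeping of the $1/z$-expansion, and in particular pinning down the quadratic term with the correct coefficient $\tfrac18$. This term does not come from any single piece of the $\omega$-formula: it arises from combining the $z^{-2k+1}\partial_z$ scaling term with the first-order Miwa-shift contribution and with the negative-power polynomial terms $\sum_{i=1}^{k}$, and the cancellations among these must be tracked carefully to obtain the stated $\tfrac12$ and $\tfrac18$ normalizations rather than spurious $z$-dependent remainders. Verifying that all the leftover powers of $z$ cancel — equivalently, that the posited $V_k$ exactly reproduces the explicit left-hand side — is the real content of the argument.
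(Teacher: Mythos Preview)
Your proposal is correct and follows essentially the same approach as the paper. The paper does not give an explicit proof of this proposition; it sets up exactly the ingredients you describe---the explicit formula for $\sum_\alpha\partial^*_{-(2k-1),1,\alpha}\omega_{\gamma,\beta}$ and the $\tau$-representation \eqref{1}---and then states the result, relying (as in the mcGD case, Proposition~\ref{r}) on the method of \cite{20}, which is precisely the Miwa-shift matching you outline.
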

Taking
\begin{equation}
\begin{split}
L_{-(2k-1)}=&\sum\limits_{\alpha=1}^{n}\big(\frac{1}{2}\sum\limits_{m=2k}^{\infty}(2m-1)t_{2m-1,\alpha}\partial_{2(m-k)-1,\alpha}\ \ \ \ \\
&+\frac{1}{8}\sum\limits_{m+s=2k}(2m-1)(2s-1)t_{2m-1,\alpha}t_{2s-1,\alpha}+c_{\alpha,\beta}\big),
\end{split}
\end{equation}
where $2k=0\ mod\ N,\ k>0$.\\
We can verify that $L_{-(2k-1)}$ satisfies the Virasoro exchange relation
\begin{equation}
[L_{-m},L_{-n}]=(-m+n)L_{-(m+n)},\ \ m,n=1,2,\cdots.\ \ \ \ \ \
\end{equation}
so the Virasoro constraint of the $\tau$ function of the mcBGD hierarchy is
\begin{equation}
L_{-(2k-1)}\tau=0,\ \ 2k=0\ mod\ N,\ k>0.\ \ \ \ \ \ \ \ \ \ \ \ \ \ \ \ \ \ \ \ \ \ \ \ \ \ \ \ \ \ \ \ \ \
\end{equation}

\section{C type multi-component Gelfand-Dickey hierarchies}
\ \ \ \ The Lax operators of the mcCGD hierarchy are composed of the operators $L$ and $R_{\alpha}$ of the mcGD hierarchy and the constraint conditions $L^{*}=(-1)^{N}L,\ R_{\alpha}^{*}=R_{\alpha}$. It is clear that the mcCGD hierarchy can be divided into two types according to the parity of $N$. The odd mcCGD hierarchy can be constructed according to the differential operator
\begin{equation*}
\ \ \ \ \ \ \ \ L=A\partial^{N}+u_{2}\partial^{N-2}+u_{3}\partial^{N-3}+\cdots +u_{0},\ N\neq 0\ mod\ 2,
\end{equation*}
\begin{equation*}
R_{\alpha}=\sum\limits_{i=1}^{\infty}R_{i\alpha}\partial^{-i},\ \ \alpha=1,2,\cdots ,n,\ \ \ \ \ \ \ \ \ \ \ \ \ \ \ \ \ \ \ \
\end{equation*}
which satisfy $L*=-L$ and $R_{\alpha}^{*}=R_{\alpha}$.\\
The even mcCGD hierarchy can also be constructed according to the differential operator
\begin{equation*}
\ \ \ \ \ \ \ \ \ L=A\partial^{N}+u_{2}\partial^{N-2}+u_{3}\partial^{N-3}+\cdots +u_{0},\ N=0\ mod\ 2,
\end{equation*}
\begin{equation*}
R_{\alpha}=\sum\limits_{i=1}^{\infty}R_{i\alpha}\partial^{-i},\ \ \alpha=1,2,\cdots ,n,\ \ \ \ \ \ \ \ \ \ \ \ \ \ \ \ \ \ \ \
\end{equation*}
which satisfy $L*=L$ and $R_{\alpha}^{*}=R_{\alpha}$.\\
Though their constraints are different, their additional symmetric generators, surviving flows, string equation and so on are the same.

Similar to the Lax equations of the mcBGD hierarchy, the Lax equations of the mcCGD hierarchy does not have even flows, that is, $u_{i}=u_{i}(x;t_{1},t_{3},\cdots)$.
\begin{proposition}
The C type multi-component Gelfand-Dickey hierarchy has only odd flows.
\end{proposition}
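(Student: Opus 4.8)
The plan is to mirror the proof of the corresponding statement for the mcBGD hierarchy, now exploiting the C type constraints $L^* = (-1)^N L$ and $R_\alpha^* = R_\alpha$ in place of the B type ones. The goal is to show that the Lax equation $\partial_{n,\alpha}L = [B_{n,\alpha},L]$ with $B_{n,\alpha} = (L^{\frac{n}{N}}R_\alpha)_+$ is compatible with these constraints only when $n$ is odd, so that every even flow is forced to vanish; the argument should run uniformly for both parities of $N$, explaining why the odd and even mcCGD hierarchies behave identically here.

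First I would record how the constraints propagate to fractional powers. Since $(L^{\frac{1}{N}})^*$ and $-L^{\frac{1}{N}}$ are both $N$-th roots of $(-1)^N L = L^*$ sharing the leading term $-A^{\frac{1}{N}}\partial$ (using $\partial^* = -\partial$ and that $A$ is a constant diagonal), uniqueness of such roots gives $(L^{\frac{1}{N}})^* = -L^{\frac{1}{N}}$, hence $(L^{\frac{n}{N}})^* = (-1)^n L^{\frac{n}{N}}$. Combining this with $R_\alpha^* = R_\alpha$ and $[L^{\frac{n}{N}},R_\alpha]=0$ yields $(L^{\frac{n}{N}}R_\alpha)^* = R_\alpha^*(L^{\frac{n}{N}})^* = (-1)^n L^{\frac{n}{N}}R_\alpha$. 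Because the formal adjoint preserves the splitting into differential and integral parts, taking the $(\cdot)_+$ part commutes with $*$, so a direct evaluation gives $B_{n,\alpha}^* = (-1)^n B_{n,\alpha}$.

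Next I would compute $\partial_{n,\alpha}L^*$ and $\partial_{n,\alpha}R_\beta^*$ in two ways, exactly as in the B type argument. On one hand, differentiating the constraints and using $[B_{n,\alpha},L^*] = (-1)^N[B_{n,\alpha},L]$ gives $\partial_{n,\alpha}L^* = [B_{n,\alpha},L^*]$ and $\partial_{n,\alpha}R_\beta^* = [B_{n,\alpha},R_\beta^*]$; on the other, taking the adjoint of the Lax equations gives $\partial_{n,\alpha}L^* = [B_{n,\alpha},L]^* = -[B_{n,\alpha}^*,L^*]$ and $\partial_{n,\alpha}R_\beta^* = -[B_{n,\alpha}^*,R_\beta^*]$. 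Subtracting shows that $B_{n,\alpha}+B_{n,\alpha}^*$ commutes with both $L$ and every $R_\beta$, whence $B_{n,\alpha}^* = -B_{n,\alpha}$. Comparing this with the direct evaluation $B_{n,\alpha}^* = (-1)^n B_{n,\alpha}$ forces $(-1)^n = -1$, i.e. $n$ is odd, which is the claim.

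The hard part will be the step deducing $B_{n,\alpha}^* = -B_{n,\alpha}$ from the fact that $B_{n,\alpha}+B_{n,\alpha}^*$ commutes with $L$ and all $R_\beta$: this is the centralizer/uniqueness step that the B type proof asserts rather than proves, and to be rigorous I would have to argue that a differential operator annihilated by the full set of generating relations must be the trivial one. A secondary care point is the fractional-power sign relation $(L^{\frac{1}{N}})^* = -L^{\frac{1}{N}}$, whose validity for both even and odd $N$ is what underwrites the uniform conclusion that each C type hierarchy admits only odd flows.
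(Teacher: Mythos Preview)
Your proposal is correct and follows essentially the same route as the paper: derive $(L^{n/N})^* = (-1)^n L^{n/N}$ from the C type constraint, compute $\partial_{n,\alpha}L^*$ (and $\partial_{n,\alpha}R_\beta^*$) in two ways to force $B_{n,\alpha}^* = -B_{n,\alpha}$, and compare with the direct evaluation $B_{n,\alpha}^* = (-1)^n B_{n,\alpha}$ to conclude $n$ is odd. You are in fact more scrupulous than the paper in two places it simply asserts: the sign relation $(L^{1/N})^* = -L^{1/N}$ (which you justify by uniqueness of the $N$-th root) and the centralizer step from $[B_{n,\alpha}+B_{n,\alpha}^*,\,L]=[B_{n,\alpha}+B_{n,\alpha}^*,\,R_\beta]=0$ to $B_{n,\alpha}^*=-B_{n,\alpha}$, which the paper states without argument.
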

\begin{proof}
From the foregoing, we already know the Lax equations of the mcCGD hierarchy. Combined with the constraints of its Lax equations, we can get
\begin{equation*}
(L^{\frac{n}{N}})^{*}=(-1)^{n}L^{\frac{n}{N}},\ \ \ \ \
\end{equation*}
\begin{equation*}
\ \ \ (L^{\frac{n}{N}}R_{\alpha})^{*}=(-1)^{n}R_{\alpha}L^{\frac{n}{N}}.
\end{equation*}
Then we simplify $\partial_{n,\alpha}L^{*}$ and $\partial_{n,\alpha}R_{\alpha}^{*}$ in two ways
\begin{equation}\label{a1}
\begin{split}
\partial_{n,\alpha}L^{*}=(\partial_{n,\alpha}L)^{*}=[B_{n,\alpha},L]^{*}=-[B_{n,\alpha},L]=[B_{n,\alpha},L^{*}],\ \ \ \ \ \ \ \ \ \ \ \ \ \ \ \ \ \ \ \ \ \ \ \ \ \ \ \ \ \ \ \\
\partial_{n,\alpha}L^{*}=[B_{n,\alpha},L]^{*}=(B_{n,\alpha}L-L B_{n,\alpha})^{*}=L^{*}B_{n,\alpha}^{*}-B_{n,\alpha}^{*}L^{*}=[-B_{n,\alpha}^{*},L^{*}],\ \ \ \ \ \ \ \\[-5pt]
\end{split}
\end{equation}
\begin{equation}\label{b1}
\begin{split}
\partial_{n,\alpha}R_{\beta}^{*}=(\partial_{n,\alpha}R_{\beta})^{*}=[B_{n,\alpha},R_{\beta}]^{*}=[B_{n,\alpha},R_{\beta}]=[B_{n,\alpha},R_{\beta}^{*}],\ \ \ \ \ \ \ \ \ \ \ \ \ \ \ \ \ \ \ \ \ \ \ \ \ \ \ \ \ \ \ \ \ \\
\partial_{n,\alpha}R_{\beta}^{*}=[B_{n,\alpha},R_{\beta}]^{*}=(B_{n,\alpha}R_{\beta}-R_{\beta} B_{n,\alpha})^{*}=R_{\beta}^{*}B_{n,\alpha}^{*}-B_{n,\alpha}^{*}R_{\beta}^{*}=[-B_{n,\alpha}^{*},R_{\beta}^{*}],\ \ \ \ \ \
\end{split}
\end{equation}
comparing \eqref{a1} and \eqref{b1}, we can find
\begin{equation}\label{c1}
\ B_{n,\alpha}^{*}=-B_{n,\alpha},\ \ n\neq 0\ mod\ N.
\end{equation}
Then we combine the definition of $B_{n,\alpha}\ (n\neq 0\ mod\ N)$ to solve its adjoint
\begin{equation}\label{d1}
B_{n,\alpha}^{*}=((L^{\frac{n}{N}}R_{\alpha})_{+})^{*}=(-1)^{n}(R_{\alpha}L^{\frac{n}{N}})_{+}=(-1)^{n}B_{n,\alpha},\ \ n\neq 0\ mod\ N.\ \
\end{equation}
By analyzing equation \eqref{c1} and \eqref{d1}, we find that $n$ can only take odd numbers.
\end{proof}
So its Lax equations can be defined as
\begin{equation*}
\ \ \ \  \ \ \ \ \ \partial_{2n+1,\alpha}L=[B_{2n+1,\alpha},L],\\ \ \ \ \ \partial_{2n+1,\alpha}R_{\beta}=[B_{2n+1,\alpha},R_{\beta}],\ \ 2n+1\neq 0\ mod\ N,\ \
\end{equation*}
\begin{equation}
\ \ \ \  \ \ \ \ \ \  \ \ \ \sum\limits_{\alpha=1}^{n}\partial_{(2j^{'}+1)N,\alpha}L=0,\ \ \ \ \ \ \ \
\sum\limits_{\alpha=1}^{n}\partial_{(2j^{'}+1)N,\alpha}R_{\beta}=0,\ \ j^{'}=0,1,2,\cdots,\ \ \ \ \
\end{equation}
where $\partial_{2n+1,\alpha}=\frac{\partial}{\partial t_{2n+1,\alpha}}$ and $B_{2n+1,\alpha}=(L^{\frac{2n+1}{N}}R_{\alpha})_{+}$.\\
Taking a dressing operator $\phi=\phi(A\partial^{N})=\sum\limits_{i=1}^{\infty}\alpha_{i}(A\partial^{N})^{-i}\ (\alpha_{0}=I,\ \phi^{*}=\phi^{-1})$, then the wave function of the mcCGD hierarchy is $W(t,z)=\phi e^{\xi(t,z)}=\omega(t,z)e^{\xi(t,z)}$, where
\begin{equation*}
\xi(t,z)=\sum\limits_{i=1}^{\infty}\sum\limits_{\alpha=1}^{n}t_{2i-1,\alpha}E_{\alpha}z^{2i-1}.\ \ \ \  \ \ \ \ \ \ \ \ \ \ \ \ \ \ \ \ \ \ \ \ \ \ \ \ \ \ \ \  \ \ \ \ \ \ \ \ \ \ \ \ \ \ \ \
\end{equation*}
In addition, the adjoint wave function of the mcCGD hierarchy is
\begin{equation*}
W^{*}(t,z)=(\phi^{*})^{-1}e^{-\xi(t,z)}=\phi e^{-\xi(t,z)}.\ \ \ \  \ \ \ \ \ \ \ \ \ \ \ \ \ \ \ \ \ \ \ \  \ \ \ \ \ \ \ \ \ \ \ \ \ \ \ \ \
\end{equation*}
Then we can get the Sato equations $\partial_{2n+1,\alpha}\phi=-(B_{2n+1,\alpha})_{-}\phi\ (2n+1\neq 0\ mod\ N)$ and $\sum\limits_{\alpha=1}^{n}\partial_{(2j^{'}+1)N,\alpha}\phi=0\ (j^{'}=0,1,2,\cdots)$ of the mcCGD hierarchy.\\
And the Lax equations of the mcCGD hierarchy can also be obtained by the compatibility conditions of the following linear partial differential equations
\begin{equation}
\begin{split}
L^{\frac{2n+1}{N}}W=z^{2n+1}W,\ \ \ \ \ \partial_{2n+1,\alpha}W=B_{2n+1,\alpha}W,\ 2n+1\neq 0\ mod\ N,\\
\sum\limits_{\alpha=1}^{n}\partial_{(2j^{'}+1)N,\alpha}W=L^{(2j^{'}+1)}W,\ \ j^{'}=0,1,2,\cdots, \ \ \ \ \ \ \ \ \ \ \ \ \ \ \ \ \ \ \ \ \ \ \ \ \ \
\end{split}
\end{equation}
where $W(t,z)$ is the wave function of the mcCGD hierarchy.

\subsection{Additional symmetry of the mcCGD hierarchies}
\ \ \ \ From the above knowledge points related to the mcCGD hierarchy, we can find that many of its definitions and properties are similar to those of the mcBGD hierarchy, and the biggest difference lies in the related definitions and operations of their adjoint. Next, we mainly discuss the knowledge of adjoint in detail. Similarly, we first give the Orlov-Shulman's operator $M=\phi\Gamma\phi^{-1}$ of the mcCGD hierarchy, where
\begin{equation*}
\Gamma=\sum\limits_{i=1}^{\infty}\sum\limits_{\alpha=1}^{n}(2i-1)t_{2i-1,\alpha}A^{\frac{2i-2}{N}}\partial^{2i-2}E_{\alpha},\ \ 2i-1\neq 0\ mod\ N,\ \ \ \ \ \ \
\end{equation*}
then we can get
\begin{equation*}
M^{*}=(\phi\Gamma\phi^{-1})^{*}=\phi\Gamma\phi^{-1}=M. \ \ \ \ \ \ \ \ \ \  \ \ \ \ \ \ \ \ \ \ \ \ \ \ \ \ \ \ \ \ \ \ \ \ \ \ \ \ \ \ \ \ \
\end{equation*}
After some necessary knowledge reserve, we will give the definition of the additional symmetry of the mcCGD hierarchy. The $C_{l,m,\alpha}$ in the additional symmetry of the mcCGD hierarchy is obviously different from the $D_{l,m,\alpha}$ in the definition of the additional symmetry of the mcBGD hierarchy. The fundamental reason is that their Lax operators have different constraints.
\begin{proposition}
In the C type multi-component Gelfand-Dickey hierarchy, the additional symmetric generators $C_{l,m,\alpha}$ should satisfy $(C_{l,m,\alpha})^{*}=-C_{l,m,\alpha}$, then $C_{l,m,\alpha}$ can be taken as
\begin{equation}
C_{l,m,\alpha}=M^{m}L^{\frac{l}{N}}R_{\alpha}-(-1)^{l}R_{\alpha}L^{\frac{l}{N}}M^{m}.\ \ \ \ \ \ \ \ \ \ \ \ \ \ \ \ \ \ \ \ \ \ \ \
\end{equation}
\end{proposition}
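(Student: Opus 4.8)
The plan is to verify directly that the proposed operator satisfies the anti-self-adjointness constraint $(C_{l,m,\alpha})^{*}=-C_{l,m,\alpha}$, relying only on the elementary adjoint rules available in the C-type reduction. First, though, it is worth recording \emph{why} this constraint is the right one to impose. The requirement $(C_{l,m,\alpha})^{*}=-C_{l,m,\alpha}$ plays exactly the same structural role as the relation $B_{n,\alpha}^{*}=-B_{n,\alpha}$ in \eqref{c1}: it is precisely the condition under which the additional flow $\partial\phi/\partial t^{*}_{l,m,\alpha}=-(C_{l,m,\alpha})_{-}\phi$ preserves the defining C-type constraints $L^{*}=(-1)^{N}L$ and $R_{\alpha}^{*}=R_{\alpha}$, so that the reduction is respected. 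Thus the proposition is really a compatibility check: one starts from the natural mcGD generator $M^{m}L^{\frac{l}{N}}R_{\alpha}$ and seeks the smallest modification that makes it anti-self-adjoint.

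First I would assemble the adjoint data for the mcCGD hierarchy. From the reduction constraints one has $(L^{\frac{l}{N}})^{*}=(-1)^{l}L^{\frac{l}{N}}$ (the fractional-power version of $L^{*}=(-1)^{N}L$, already used in the proof that the hierarchy carries only odd flows) and $R_{\alpha}^{*}=R_{\alpha}$, while the identity $M^{*}=\phi\Gamma\phi^{-1}=M$ established just above yields $(M^{m})^{*}=(M^{*})^{m}=M^{m}$. Together with the anti-automorphism rule $(AB)^{*}=B^{*}A^{*}$ and the involution property $(A^{*})^{*}=A$, these are all the ingredients needed.

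The key structural observation is that $C_{l,m,\alpha}$ is exactly of the form $X-X^{*}$ for $X:=M^{m}L^{\frac{l}{N}}R_{\alpha}$, and any operator of this form is automatically anti-self-adjoint, since $(X-X^{*})^{*}=X^{*}-X=-(X-X^{*})$. So the second step is merely to compute
\begin{equation*}
X^{*}=(M^{m}L^{\frac{l}{N}}R_{\alpha})^{*}=R_{\alpha}^{*}(L^{\frac{l}{N}})^{*}(M^{m})^{*}=(-1)^{l}R_{\alpha}L^{\frac{l}{N}}M^{m},
\end{equation*}
whence $X-X^{*}=M^{m}L^{\frac{l}{N}}R_{\alpha}-(-1)^{l}R_{\alpha}L^{\frac{l}{N}}M^{m}$, which is precisely the claimed expression for $C_{l,m,\alpha}$. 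The desired relation $(C_{l,m,\alpha})^{*}=-C_{l,m,\alpha}$ then follows instantly from the general identity above, or by a one-line direct check that uses $(-1)^{2l}=1$ on the second term.

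I do not anticipate any serious obstacle: the argument is a short manipulation with the adjoint rules. The only point requiring a little care is the justification of the fractional-power rule $(L^{\frac{l}{N}})^{*}=(-1)^{l}L^{\frac{l}{N}}$ for negative and non-integer $l$, which is inherited from $L^{*}=(-1)^{N}L$ by taking $N$-th roots consistently, together with the remark that $(M^{m})^{*}=M^{m}$ rests on the clean C-type dressing relation $\phi^{*}=\phi^{-1}$. This simplicity of $M^{*}=M$ is exactly what makes the C-type generator shorter than the B-type generator $D_{l,m,\alpha}$, where the twisted relations $\phi^{*}=\partial\phi^{-1}\partial^{-1}$ and $R_{\alpha}^{*}=\partial R_{\alpha}\partial^{-1}$ force the extra factors $L^{\frac{l-1}{N}}\cdots L^{\frac{1}{N}}$ in place of the symmetric pair $L^{\frac{l}{N}}\cdots L^{\frac{l}{N}}$ appearing here.
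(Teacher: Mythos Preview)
Your proposal is correct and follows essentially the same route as the paper: compute $(M^{m}L^{l/N}R_{\alpha})^{*}=(-1)^{l}R_{\alpha}L^{l/N}M^{m}$ using $R_{\alpha}^{*}=R_{\alpha}$, $(L^{l/N})^{*}=(-1)^{l}L^{l/N}$, $M^{*}=M$, and then anti-symmetrize. The one place the paper is more explicit than you is in \emph{deriving} the constraint $(C_{l,m,\alpha})^{*}=-C_{l,m,\alpha}$ rather than motivating it heuristically: instead of appealing to preservation of the Lax constraints, the paper works directly at the level of the dressing operator, computing $\partial^{*}_{l,m,\alpha}\phi^{*}$ in two ways---once as $(\partial^{*}_{l,m,\alpha}\phi)^{*}=-\phi^{*}(C_{l,m,\alpha})_{-}^{*}$ and once via $\phi^{*}=\phi^{-1}$ as $-\phi^{-1}(\partial^{*}_{l,m,\alpha}\phi)\phi^{-1}=\phi^{*}(C_{l,m,\alpha})_{-}$---and equates them to force $(C_{l,m,\alpha})_{-}^{*}=-(C_{l,m,\alpha})_{-}$. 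This is exactly the concrete version of the compatibility argument you sketched, so the two proofs agree in substance.
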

\begin{proof}
According to the the additional symmetry of the mcCGD hierarchy, we can know that
\begin{equation*}
\partial_{l,m,\alpha}^{*}\phi^{*}=(\partial_{l,m,\alpha}^{*}\phi)^{*}=(-(C_{l,m,\alpha})_{-}\phi)^{*}=-\phi^{*}(C_{l,m,\alpha})_{-}^{*},\ \ \
\end{equation*}
and according to $\phi^{*}=\phi^{-1}$, we can deduce that
\begin{equation*}
\ \ \ \ \ \ \ \ \ \ \ \ \ \ \ \ \ \ \partial_{l,m,\alpha}^{*}\phi^{*}=\partial_{l,m,\alpha}^{*}\phi^{-1}=-\phi^{-1}(\partial_{l,m,\alpha}^{*}\phi)\phi^{-1}=\phi^{-1}(C_{l,m,\alpha})_{-}=\phi^{*}(C_{l,m,\alpha})_{-},
\end{equation*}
by comparing the results of the above two different methods, we can find that
\begin{equation*}
(C_{l,m,\alpha})_{-}=-(C_{l,m,\alpha})_{-}^{*},
\end{equation*}
so $\ C_{l,m,\alpha}=-(C_{l,m,\alpha})^{*}.$\\
Combined with the additional symmetric generators$\ M^{m}L^{\frac{l}{N}}R_{\alpha}\ $of the mcGD hierarchy, we will have
\begin{equation*}
(M^{m}L^{\frac{l}{N}}R_{\alpha})^{*}=R_{\alpha}^{*}(L^{\frac{l}{N}})^{*}(M^{m})^{*}=(-1)^{l}R_{\alpha}L^{\frac{l}{N}}M^{m}, \ \ \ \ \ \ \
\end{equation*}
then it's easy to construct
\begin{equation*}
C_{l,m,\alpha}=M^{m}L^{\frac{l}{N}}R_{\alpha}-(-1)^{l}R_{\alpha}L^{\frac{l}{N}}M^{m}. \ \ \ \ \ \ \ \ \ \ \ \ \ \ \ \ \ \ \ \ \ \ \ \
\end{equation*}
\end{proof}
\begin{definition}
The solution of the differential equation
\begin{equation}
\frac{\partial\phi}{\partial_{l,m,\alpha}^{*}}=-(C_{l,m,\alpha})_{-}\phi,
\end{equation}
where
\begin{equation}
\partial_{l,m,\alpha}^{*}=\frac{\partial}{\partial t_{l,m,\alpha}^{*}},\ \ C_{l,m,\alpha}=M^{m}L^{\frac{l}{N}}R_{\alpha}-(-1)^{l}R_{\alpha}L^{\frac{l}{N}}M^{m},\ \ \ \ \ \ \
\end{equation}
is called the additional symmetry of the C type multi-component Gelfand-Dickey hierarchy.
\end{definition}
Similarly, for the differential operator $C_{l,m,\alpha}$, assuming its negative part disappears and let the operator $(C_{l,m,\alpha})_{+}$ act on $W$, we can get an equation related to $z$
\begin{equation}
(C_{l,m,\alpha})_{+}W=(1-(-1)^{l})z^{l}E_{\alpha}\partial^{m}_{z}W-(-1)^{l}mlz^{l-1}E_{\alpha}\partial^{m-1}_{z}W.
\end{equation}
Note:This system can also be rewritten into a linear equation for the isomonodromy problem.\\
The following is a brief description of the construction method of $C_{l,m,\alpha}$.
According to the additional symmetry of the mcCGD hierarchy, we can deduce
\begin{equation}
\ \ \ \ \ \partial_{l,m,\alpha}^{*}L=-[(C_{l,m,\alpha})_{-},L], \ \ \ \ \ \ \
\partial_{l,m,\alpha}^{*}R_{\beta}=-[(C_{l,m,\alpha})_{-},R_{\beta}].
\end{equation}
For the additional flows of the mcCGD hierarchy, they have a fraction of the flows that can survive.
\begin{theorem}
In the additional flows of the C type multi-component Gelfand-Dickey hierarchy, the flows which satisfy the condition $(M^{m-1}L^{\frac{N+l-1}{N}})_{-}=0$ and are shaped like $\sum\limits_{\alpha=1}^{n}\partial_{l,m,\alpha}^{*}$ or the condition $l=2i\ (i\in\mathbb{Z})$ and are shaped like $\sum\limits_{\alpha=1}^{n}\partial_{l,m,\alpha}^{*}$ can survive.
\end{theorem}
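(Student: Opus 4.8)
The plan is to imitate, step for step, the argument already carried out for the mcBGD hierarchy, the only change being that the generator $D_{l,m,\alpha}$ is replaced by $C_{l,m,\alpha}=M^{m}L^{\frac{l}{N}}R_{\alpha}-(-1)^{l}R_{\alpha}L^{\frac{l}{N}}M^{m}$. Since the Lax operator $L=A\partial^{N}+u_{2}\partial^{N-2}+\cdots+u_{0}$ is purely differential, $L_{-}=0$, so a flow survives exactly when it creates no negative part, i.e. when $\sum_{\alpha}\partial^{*}_{l,m,\alpha}L_{-}=0$. Starting from $\partial^{*}_{l,m,\alpha}L=-[(C_{l,m,\alpha})_{-},L]$, I first take the negative part; because $(C_{l,m,\alpha})_{+}$ and $L$ are both differential, $[(C_{l,m,\alpha})_{+},L]$ has no negative part, so
\begin{equation*}
\partial^{*}_{l,m,\alpha}L_{-}=-[(C_{l,m,\alpha})_{-},L]_{-}=-[C_{l,m,\alpha},L]_{-}.
\end{equation*}

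Next I would dress everything down through $L=\phi A\partial^{N}\phi^{-1}$, $M=\phi\Gamma\phi^{-1}$, $R_{\alpha}=\phi E_{\alpha}\phi^{-1}$, turning the generator into $C_{l,m,\alpha}=\phi\big(\Gamma^{m}A^{\frac{l}{N}}\partial^{l}E_{\alpha}-(-1)^{l}E_{\alpha}A^{\frac{l}{N}}\partial^{l}\Gamma^{m}\big)\phi^{-1}$ and reducing the problem to evaluating
\begin{equation*}
-[C_{l,m,\alpha},L]_{-}=-\big(\phi[\Gamma^{m}A^{\frac{l}{N}}\partial^{l}E_{\alpha},A\partial^{N}]\phi^{-1}\big)_{-}+(-1)^{l}\big(\phi[E_{\alpha}A^{\frac{l}{N}}\partial^{l}\Gamma^{m},A\partial^{N}]\phi^{-1}\big)_{-}.
\end{equation*}
Here $E_{\alpha}$ commutes with $A\partial^{N}$ and with $A^{\frac{l}{N}}\partial^{l}$, while $A^{\frac{l}{N}}\partial^{l}$ commutes with $A\partial^{N}$, so both brackets collapse onto the single object $K=[\Gamma^{m},A\partial^{N}]$, producing the combination $E_{\alpha}\big(-KA^{\frac{l}{N}}\partial^{l}+(-1)^{l}A^{\frac{l}{N}}\partial^{l}K\big)$. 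Using the canonical relation $[A^{\frac{1}{N}}\partial,\Gamma]=I$ (equivalently $[\partial,\Gamma]=A^{-\frac{1}{N}}$) to evaluate $K$ and then re-dressing, this should reduce, exactly as in the mcBGD case, to
\begin{equation*}
\partial^{*}_{l,m,\alpha}L_{-}=(1-(-1)^{l})\,mN\,(M^{m-1}L^{\frac{N+l-1}{N}}R_{\alpha})_{-}.
\end{equation*}

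Finally I would sum over $\alpha$. Since $[R_{\alpha},L]=0$ and $[R_{\alpha},M]=O$, the factor $R_{\alpha}$ can be pulled outside each product, so that $\sum_{\alpha}R_{\alpha}=I$ applies directly and yields
\begin{equation*}
\sum_{\alpha=1}^{n}\partial^{*}_{l,m,\alpha}L_{-}=(1-(-1)^{l})\,mN\,(M^{m-1}L^{\frac{N+l-1}{N}})_{-}.
\end{equation*}
This right-hand side vanishes precisely in the two announced cases: either $1-(-1)^{l}=0$, i.e. $l=2i$ with $i\in\mathbb{Z}$, or $(M^{m-1}L^{\frac{N+l-1}{N}})_{-}=0$. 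In either case $\sum_{\alpha}\partial^{*}_{l,m,\alpha}L_{-}=0$, so the flow $\sum_{\alpha}\partial^{*}_{l,m,\alpha}$ preserves the constraint $L_{-}=0$ and therefore survives.

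The step I expect to be the main obstacle is the commutator reduction of the second paragraph, and in particular the verification that the modified second term $R_{\alpha}L^{\frac{l}{N}}M^{m}$ of $C_{l,m,\alpha}$ produces the same reduced expression as the mcBGD term $R_{\alpha}L^{\frac{l-1}{N}}M^{m}L^{\frac{1}{N}}$. The most transparent way to control this is the operator identity $L^{\frac{l-1}{N}}M^{m}L^{\frac{1}{N}}=L^{\frac{l}{N}}M^{m}-mL^{\frac{l-1}{N}}M^{m-1}$, which follows from $[M^{m},L^{\frac{1}{N}}]=-mM^{m-1}$ and gives $C_{l,m,\alpha}=D_{l,m,\alpha}-(-1)^{l}mR_{\alpha}L^{\frac{l-1}{N}}M^{m-1}$; the extra term commutes with $L$ in the cases relevant to the surviving flows, in particular for $m=1$ (where $M^{m-1}=I$ and $R_{\alpha}L^{\frac{l-1}{N}}$ commutes with $L$), so that $-[C_{l,m,\alpha},L]_{-}$ coincides with $-[D_{l,m,\alpha},L]_{-}$ and the surviving flows of the C and B types take identical form. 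Carrying the Heisenberg-type bookkeeping for $K=[\Gamma^{m},A\partial^{N}]$ through cleanly, rather than only to leading order, is the one genuinely computational point.
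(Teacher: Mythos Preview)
Your proposal is correct and follows essentially the same route as the paper: you start from $\partial^{*}_{l,m,\alpha}L_{-}=-[(C_{l,m,\alpha})_{-},L]_{-}$, dress down through $\phi$ to obtain the identical expression $-(\phi[\Gamma^{m}A^{\frac{l}{N}}\partial^{l}E_{\alpha},A\partial^{N}]\phi^{-1})_{-}+(-1)^{l}(\phi[E_{\alpha}A^{\frac{l}{N}}\partial^{l}\Gamma^{m},A\partial^{N}]\phi^{-1})_{-}$, reduce to $(1-(-1)^{l})mN(M^{m-1}L^{\frac{N+l-1}{N}}R_{\alpha})_{-}$, and sum over $\alpha$. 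Your closing paragraph relating $C_{l,m,\alpha}$ to $D_{l,m,\alpha}$ via $[M^{m},L^{1/N}]=-mM^{m-1}$ is a nice supplementary observation, not present in the paper, that explains transparently why the B- and C-type surviving flows coincide in form.
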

\begin{proof}
For the Lax operator $L$ of the mcCGD hierarchy, its negative part is also equal to zero.\\
And we know $\partial_{l,m,\alpha}^{*}L=-[(C_{l,m,\alpha})_{-},L]$, so let's consider
\begin{equation*}
\begin{split}
\ \ \ \ \ \ \ \ \ \ \partial_{l,m,\alpha}^{*}L_{-}=&-[(C_{l,m,\alpha})_{-},L]_{-}=-(\phi[\Gamma^{m}A^{\frac{l}{N}}\partial^{l}E_{\alpha},A\partial^{N}]\phi^{-1})_{-}\\
&+(-1)^{l}(\phi[E_{\alpha}A^{\frac{l}{N}}\partial^{l}\Gamma^{m},A\partial^{N}]\phi^{-1})_{-},
\end{split}
\end{equation*}
after some deductions, the upper formula can be reduced to
\begin{equation*}
\partial_{l,m,\alpha}^{*}L_{-}=\partial_{l,m,\alpha}^{*}L_{-}=(1-(-1)^{l})mN(M^{m-1}L^{\frac{N+l-1}{N}}R_{\alpha})_{-},\ \
\end{equation*}
only when $(M^{m-1}L^{\frac{N+l-1}{N}})_{-}=0$ or $l=2i\ (i\in\mathbb{Z})$, we can obtain
\begin{equation*}
\sum\limits_{\alpha=1}^{n}\partial_{l,1,\alpha}^{*}L_{-}=(1-(-1)^{l})mN(M^{m-1}L^{\frac{N+l-1}{N}})_{-}=0.\ \ \ \ \ \ \ \
\end{equation*}
\end{proof}
The surviving additional flows of the mcCGD hierarchy satisfy the following propositions.
\begin{proposition}
The additional symmetric flows $\sum\limits_{\alpha=1}^{n}\partial_{l,m,\alpha}^{*}$ which satisfy the condition $(M^{m-1}L^{\frac{N+l-1}{N}})_{-}=0$ or the condition $l=2i\ (i\in\mathbb{Z})$ commute with the flows $\partial_{2k+1,\beta}\ (2k+1\neq 0\ mod\ N)$ of the C type multi-component Gelfand-Dickey hierarchy.
\end{proposition}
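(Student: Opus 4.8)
The plan is to mirror the commutativity argument already established for the mcGD hierarchy in Proposition~\ref{2.7} (and repeated for the mcBGD hierarchy), replacing the additional generator $M^{m}L^{\frac{l}{N}}R_{\alpha}$ by the C type generator $C_{l,m,\alpha}$ and the ordinary-flow generator $(L^{\frac{n}{N}}R_{\beta})_{+}$ by $B_{2k+1,\beta}=(L^{\frac{2k+1}{N}}R_{\beta})_{+}$. Since both families act as derivations on the differential algebra generated by the coefficients of $\phi$, it is enough to show the bracket vanishes on $\phi$ itself, i.e. $\big[\sum_{\alpha=1}^{n}\partial_{l,m,\alpha}^{*},\partial_{2k+1,\beta}\big]\phi=0$; commutativity on the whole algebra then follows formally. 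First I would pull the finite sum outside the bracket, reducing the task to the single-index identity $[\partial_{l,m,\alpha}^{*},\partial_{2k+1,\beta}]\phi=0$.

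Next I would expand this single commutator using the two evolution laws for $\phi$: the Sato equation $\partial_{2k+1,\beta}\phi=-(B_{2k+1,\beta})_{-}\phi$ of the mcCGD hierarchy and the defining additional-symmetry equation $\partial_{l,m,\alpha}^{*}\phi=-(C_{l,m,\alpha})_{-}\phi$. Applying the Leibniz rule to each term produces four contributions: two in which a flow hits the coefficient operator $(B_{2k+1,\beta})_{-}$ or $(C_{l,m,\alpha})_{-}$, and two in which a flow hits $\phi$ and is re-expressed through the same two evolution laws. The action on $(B_{2k+1,\beta})_{-}$ is handled by the C type analogue of the intermediate identity from the mcGD Corollary, namely $\partial_{l,m,\alpha}^{*}(L^{\frac{2k+1}{N}}R_{\beta})_{-}=-[(C_{l,m,\alpha})_{-},(L^{\frac{2k+1}{N}}R_{\beta})_{-}+\partial_{2k+1,\beta}]_{-}$, which in turn follows from $\partial_{l,m,\alpha}^{*}L=-[(C_{l,m,\alpha})_{-},L]$ and $\partial_{l,m,\alpha}^{*}R_{\beta}=-[(C_{l,m,\alpha})_{-},R_{\beta}]$ stated above.

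After substituting all four pieces and re-organizing the $\pm$ projections, the auxiliary operator $\partial_{2k+1,\beta}$ drops out of the two bracketed terms in which it appears, and the expression collapses to $[(L^{\frac{2k+1}{N}}R_{\beta})_{-},(C_{l,m,\alpha})_{-}]\phi-[(L^{\frac{2k+1}{N}}R_{\beta})_{-},(C_{l,m,\alpha})_{-}]_{-}\phi$, i.e. to the non-negative part of a single commutator acting on $\phi$. The step I expect to be the genuine content, rather than bookkeeping, is the final structural observation: both $(L^{\frac{2k+1}{N}}R_{\beta})_{-}$ and $(C_{l,m,\alpha})_{-}$ are purely pseudo-differential operators of strictly negative order, so their commutator is again of strictly negative order and its non-negative projection vanishes identically. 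This forces $[\partial_{l,m,\alpha}^{*},\partial_{2k+1,\beta}]\phi=0$ for each $\alpha$, and summing over $\alpha$ completes the proof.

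Finally, I would remark that, in contrast to the survival theorem immediately preceding, this commutativity does not actually use either selecting hypothesis $(M^{m-1}L^{\frac{N+l-1}{N}})_{-}=0$ or $l=2i$; those conditions only guarantee that the summed flow $\sum_{\alpha=1}^{n}\partial_{l,m,\alpha}^{*}$ is nontrivial, whereas the vanishing of the bracket is a purely formal identity valid for the generator $C_{l,m,\alpha}$ regardless. This parallels the mcGD situation and keeps the C type proof structurally identical to the earlier ones.
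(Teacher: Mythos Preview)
Your proposal is correct and follows essentially the same route as the paper: pull the sum over $\alpha$ outside, expand $[\partial_{l,m,\alpha}^{*},\partial_{2k+1,\beta}]\phi$ via the Sato equation and the additional-symmetry equation, invoke the identity $\partial_{l,m,\alpha}^{*}(B_{2k+1,\beta})_{-}=-[(C_{l,m,\alpha})_{-},(B_{2k+1,\beta})_{-}+\partial_{2k+1,\beta}]_{-}$, and collapse to the non-negative projection of a commutator of two strictly negative-order operators. Your closing remark that neither hypothesis is actually used in the commutativity computation is accurate and a point the paper leaves implicit.
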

\begin{proof}
\begin{equation*}
\ \ \ \ \ \ \ \ \big[\sum\limits_{\alpha=1}^{n}\partial_{l,m,\alpha}^{*},\partial_{2k+1,\beta}\big]\phi=\Big(\sum\limits_{\alpha=1}^{n}\big[\partial_{l,m,\alpha}^{*},\partial_{2k+1,\beta}\big]\Big)\phi=
\sum\limits_{\alpha=1}^{n}\Big(\big[\partial_{l,m,\alpha}^{*},\partial_{2k+1,\beta}\big]\phi\Big),
\end{equation*}
\begin{equation*}
\begin{split}
&[\partial_{l,m,\alpha}^{*},\partial_{2k+1,\beta}]\phi\\
=&-\partial_{l,m,\alpha}^{*}((B_{2k+1,\beta})_{-}\phi)+\partial_{2k+1,\beta}((C_{l,m,\alpha})_{-}\phi)\\
=&[(C_{l,m,\alpha})_{-},(B_{2k+1,\beta})_{-}+\partial_{2k+1,\beta}]_{-}\phi+(B_{2k+1,\beta})_{-}(C_{l,m,\alpha})_{-}\phi\\ \ \ \ \ \ \ \
&+[\partial_{2k+1,\beta},(C_{l,m,\alpha})_{-}]_{-}\phi-(C_{l,m,\alpha})_{-}(B_{2k+1,\beta})_{-}\phi\\
=&[(C_{l,m,\alpha})_{-},(B_{2k+1,\beta})_{-}]\phi-[(C_{l,m,\alpha})_{-},(B_{2k+1,\beta})_{-}]_{-}\phi\\
=&0,
\end{split}
\end{equation*}
then $\big[\sum\limits_{\alpha=1}^{n}\partial_{l,m,\alpha}^{*},\partial_{2k+1,\beta}\big]\phi=0,$ so the proposition is proved.
\end{proof}
After some calculations, we find that whether the odd mcCGD hierarchy or the even mcCGD hierarchy, their string equations are the same. The derivation process is given below. First we consider a additional flow
\begin{equation*}
\sum\limits_{\alpha=1}^{n}\partial_{-(2k-1),1,\alpha}^{*}L^{\frac{2k}{N}}=-\sum\limits_{\alpha=1}^{n}\big[(C_{-(2k-1),1,\alpha})_{-},L^{\frac{2k}{N}}\big],\ 2k=0\ mod\ N,\ \ \ \ \ \ \ \ \ \ \ \ \ \
\end{equation*}
after a series of calculations, we will obtain
\begin{equation*}
\sum\limits_{\alpha=1}^{n}\partial_{-(2k-1),1,\alpha}^{*}L^{\frac{2k}{N}}=\sum\limits_{\alpha=1}^{n}\big[(C_{-(2k-1),1,\alpha})_{+},L^{\frac{2k}{N}}\big]+4kI,\ \ \ \ \ \
\end{equation*}
When $2k=0\ mod\ N$, $L^{\frac{2k}{N}}$ is a differential operator, then
\begin{equation*}
\sum\limits_{\alpha=1}^{n}\big[L^{\frac{2k}{N}},\frac{1}{4k}(C_{-(2k-1),1,\alpha})_{+}\big]=I,\ \ \ \ \ \ \ \ \ \ \ \ \ \ \ \ \ \ \ \ \ \ \ \ \ \ \ \ \
\end{equation*}
thus $\Big[L^{\frac{2k}{N}},\frac{1}{4k}\big(\sum\limits_{\alpha=1}^{n}(C_{-(2k-1),1,\alpha})_{+}\big)\Big]=I\ $ is the string equation of the mcCGD hierarchy.
In some literatures, the string equation is described as the form of $[P,Q]=1$, where $P$ and $Q$ are differential operators, which is equivalent to the string equation derived by us, and refers to the condition that the operator is independent of variables.
\\[5pt]

{\bf {Acknowledgements:}}
Chuanzhong Li  is  supported by the National Natural Science Foundation of China under Grant No. 11571192 and K. C. Wong Magna Fund in
Ningbo University.\\

\ \ \ \
\end{document}